\newtheorem{theorem}{Theorem}
\newtheorem{proposition}{Proposition}
\newtheorem{definition}{Definition}
\newcommand{\hilb}{\mathcal{H}}
\DeclareMathOperator*{\ExpectedValue}{\mathbb{E}}
\begin{document}

\title{Positive-definite parametrization of mixed quantum states with deep neural networks}

\author{Filippo Vicentini}
\email{filippo.vicentini@epfl.ch}
\author{Riccardo Rossi}%
\author{Giuseppe Carleo}%
\affiliation{
\'{E}cole Polytechnique F\'{e}d\'{e}rale de Lausanne (EPFL), Institute of Physics, CH-1015 Lausanne, Switzerland
}%

\date{\today}

\begin{abstract}
We introduce the Gram-Hadamard Density Operator (GHDO), a new deep neural-network architecture that can encode positive semi-definite density operators of exponential rank with polynomial resources.
We then show how to embed an autoregressive structure in the GHDO to allow direct sampling of the probability distribution. 
These properties are especially important when representing and variationally optimizing the mixed quantum state of a system interacting with an environment.
Finally, we benchmark this architecture by simulating the steady state of the dissipative transverse-field Ising model.
Estimating local observables and the R\'enyi entropy, we show significant improvements over previous state-of-the-art variational approaches.
\end{abstract}

\maketitle


\section{Introduction}
\label{sec:intro}

Neural Quantum States (NQS) leverage Neural Networks (NN) to encode the state of quantum systems~\cite{carleo2017Science}.
The success of those techniques relies on the Universal Approximation Theorems~\cite{Cybenko1989Representation,Leshno1993Repr,best2010simulating}, which theoretically suggests that a sufficiently large NN is able to efficiently approximate a broad class of functions.
In the last few years the field has seen important progress and developments among several axes. 
In particular, a lot of effort has been devoted to creating more powerful ansatzes: after the first works on shallow Restricted Boltzman Machines (RBM)~\cite{Glasser2018PRX,Lu19PRBTopologicalNQS,Nomura2021Symm}, deeper, more expressive networks have been implemented~\cite{Saito2018Bosonic}, as well as autoregressive convolutional~\cite{sharir2020deep,Wu2019PRL-AutoregClassical} and recurrent~\cite{Allah2020PRR-RNN} neural networks that allow for direct sampling of the Born amplitudes\footnote{Direct sampling is more efficient than Metropolis, as it avoids autocorrelation problems in the sampled chain. This is particularly important for Variational Monte Carlo techniques, which cannot rely on a large number of samples at each optimization step.}. 
To further improve the performance while lowering the computational cost, physical structure has been imposed on the ansatz by exploting translational~\cite{choo2018symmetries,roth2021group} and non-abelian~\cite{Vieijra2020PRL-NonAbelian,Vieijra2021PRB} symmetries.
Moreover, considerable effort has also been devoted to developing architectures that can encode the antisymmetry of a fermionic wavefunction~\cite{Choo2020Fermions,NomuraPRBFermionic,Stokes2020PRB,Pfau2020PRR,Spencer2020ArXiv,Hermann2020NatureChem,Yoshioka2021Fermionic,Inui2021PRR,Diluo2022PRL,Nys22Fermions}.

In parallel, higher-order optimization techniques for ground-state optimisation have been proposed~\cite{Martens2015Arxiv,Webber2021Hessian}.
NQS can also be used to investigate the dynamics, either through an explicit Time-Dependent Variational Principle (TDVP)~\cite{Carleo2012TDVP,Yuan2019TDVP}, which can be affected by dynamical instabilities~\cite{Hofmann2021TDVPNoise},  or through a recently proposed implicit scheme~\cite{Gutirrez2022Quantum-Implicit,Reh2021PRL}.
Lastly, applications to quantum state tomography have also been proposed~\cite{Torlai2018NatPhys,Torlai2019PRLQST,Melkani2020PRAEigenstate,Palmieri2020NPJ,Torlai2020PRR,AhmedPRL2021QST}.

This rapid pace of development has been focused around the field of closed, isolated systems, where the state to be approximated is a pure wavefunction. 
Contrarily, applications and developments to fields where the underlying state is a mixed density matrix, such as out-of-equilibrium setups, open quantum systems and mixed state tomography, have been rather limited so far.

In 2019, proposals to study the steady-state~\cite{vicentini2019prl} and dynamics~\cite{Hartmann2019PRLDissipative,Nagy2019PRL} of Markovian Open Quantum Systems with NQS were published.
These works parametrized the logarithm of the matrix elements $\log \rho(\sigma, \eta)$ in the computational basis $\sigma,\eta \in \{\uparrow, \downarrow\}^N$, so that the density matrix is given by
\begin{equation}
    \hat{\rho} = \sum_{\sigma, \eta} \exp[\log\rho(\sigma, \eta)]\ket{\sigma}\bra{\eta}.
\end{equation}
By parametrizing $\log\rho(\sigma,\eta)$ with a neural network, and by using a variational principle to approximate the dynamics or the steady-state, the computational cost can in general be made polynomial in the system size.
The most common approach relies on a shallow network, a purified RBM, which automatically satisfies the physical requirement of hermiticity and positive semi-definiteness~\cite{torlai2018latent}.
Another approach is to not enforce these physical requirements~\cite{Yoshioka2019PRB}.
These pioneering works suggest that enforcing physical requirements significantly improves the performance.

While the purified RBM approach of Ref.~\cite{torlai2018latent} possesses many desirable properties, 
it is, however, ultimally limited as it can only be applied to shallow networks with one layer.
Increasing the depth is often necessary to build more advanced architectures such as autoregressive or recurrent networks, and is widely expected to improve the expressive power of the ansatz~\cite{Levine2019}.
An architecture based on a purified Deep Boltzmann Machine has been proposed by Nomura and coworkers in order to represent Gibbs States~\cite{Nomura2021PRL}. In the general case, however, this approach scales exponentially with the number of deep hidden units. 

The most advanced, state-of-the-art results~\cite{Reh2021PRL,Luo2022PRL} have been obtained with the Positive-Operator Valued Measurement (POVM) ansatz of Carrasquilla and coworkers~\cite{Carrasquilla2019POVM}, where the density matrix is expanded in the basis $\{ a \}$  of the outcomes of a set of measurement operators $\{ \hat{M}_a \}$, given by

\begin{equation}
    \hat{\rho}= \sum_a p(a) \hat{M}_a.
\end{equation}
The advantage of this representation is that one only needs to variationally approximate the positive probability distribution $p(a)$, which can also be parametrized with autoregressive, recurrent and deep networks.
While it is possible to enforce local positivity by carefully chosing $\hat{M}_a$, however, the density matrix so obtained is not in general, a positive-definite and valid physical state.

Lastly, we mention a different approach originally discussed by Melkani et al. in Ref.~\cite{Melkani2020PRAEigenstate} and more recently by Donatella et al. in Ref.~\cite{Donatella2022Private,Donatella2021PRA}, where the density matrix is expanded onto a finite number of pure-states which are approximated with an NQS.
As the entropy of a generic mixed quantum state scales linearly with the system size, its rank scales exponentially, requiring thus an untractable computational effort for large systems.

At the time of writing and to the extent of our knowledge, a flexible ansatz that can represent arbitrary-rank, positive semi-definite density matrices, whose depth can be tuned, does not exist.

In this work we introduce the class of Gram-Hadamrd density operators, based on deep neural networks, which ensure the positive semi-definiteness of the density matrix.
We also show that by embedding an autoregressive structure it is possible to perform direct sampling of the diagonal probability distribution.
We then present numerical results for the dissipative transverse-field Ising model to benchmark our architecture and compare it with previous approaches.
Estimating local observables, we find that the reduced 1-spin density matrix is well reproduced by this ansatz, and by computing the Renyi-2 entropy we assess that we are able to capture some global properties as well.

This article is structured as follows: after this introduction, we briefly recall some mathematical concepts that are useful to impose  positive-definitness (\cref{sec:algebra-prelim}); we introduce the Gram-Hadamard architectures in \cref{sec:gram-hadamard-ansatz} and \cref{sec:autoreg-ansatz}; in \cref{sec:tdvp} we discuss the variational time-evolution of these NDOs; in \cref{sec:results-ising} we present the results of the numerical experiments; finally, in \cref{sec:conclusion} we present the conclusions and outlooks of this work.

\section{Deep ansatzes for mixed quantum states}
\label{sec:deep-ansatz}

\subsection{Algebraic Preliminaries}
\label{sec:algebra-prelim}
We start by briefly recalling some results of linear algebra.
We start with the definition of a Gram matrix:

\begin{definition}[Gram matrix]
The \textbf{Gram matrix} of a sequence of $R$ vectors $(\bm{\psi}_a)_{a\in\{1,\dots ,R\}}$ of dimension $n$ is
the $n\times n$ matrix $G$ given by
\begin{equation}
    G = A\, A^\dagger,
\end{equation}
where $A_{i,a}=({\psi}_a)_i$, $j\in\{1,\dots,n\}$. 
Notably, $G$ is positive semi-definite and $\text{Rank}[G] \leq R$. 
\end{definition}
We remark that any hermitian positive semi-definite matrix can be written as a Gram matrix.
We then give a formal definition of the element-wise product:

\begin{definition}[Hadamard product]
We define the Hadamard (or element-wise) product $H=A\odot B$ of two matrices $A$ and $B$ to be:
\begin{equation}
H_{i,j} = (A\odot B)_{i,j} := A_{i,j}\, B_{i,j}.
\end{equation}
\end{definition}
Interestingly, a well-known result in linear-algebra is that the Hadamard product of matrices preserves the positive definiteness

\begin{theorem}[Shur’s product theorem]
Let $A$ and $B$ be hermitian positive semi-definite matrices. Then, $A\odot B$ is also hermitian positive semi-definite.
\end{theorem}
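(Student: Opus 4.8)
The plan is to reduce the statement to the fact, recorded just after the definition of the Gram matrix, that every Gram matrix is positive semi-definite. Hermiticity is disposed of in one line: since $(A\odot B)_{ij}=A_{ij}B_{ij}$ and both factors are Hermitian, $(A\odot B)_{ji}=A_{ji}B_{ji}=\overline{A_{ij}}\,\overline{B_{ij}}=\overline{(A\odot B)_{ij}}$, so $A\odot B$ is Hermitian.

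For positive semi-definiteness I would use the remark that any Hermitian positive semi-definite matrix is a Gram matrix: write $A=\Psi\Psi^\dagger$ and $B=\Phi\Phi^\dagger$ with $\Psi\in\mathbb{C}^{n\times R_A}$ and $\Phi\in\mathbb{C}^{n\times R_B}$, i.e. $A_{ij}=\sum_{a}\Psi_{ia}\overline{\Psi_{ja}}$ and $B_{ij}=\sum_{b}\Phi_{ib}\overline{\Phi_{jb}}$. Multiplying these expansions term by term,
\[
(A\odot B)_{ij}=A_{ij}B_{ij}=\sum_{a,b}\bigl(\Psi_{ia}\Phi_{ib}\bigr)\,\overline{\bigl(\Psi_{ja}\Phi_{jb}\bigr)}.
\]
Now introduce the $n\times R_AR_B$ matrix $C$ whose columns are labeled by the pair $(a,b)$, with $C_{i,(a,b)}:=\Psi_{ia}\Phi_{ib}$. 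The displayed identity then reads exactly $A\odot B=CC^\dagger$, so $A\odot B$ is itself the Gram matrix of the sequence of $n$-vectors obtained by forming all entrywise products of a generating sequence for $A$ with one for $B$; hence it is positive semi-definite by the property stated in the definition. As a byproduct one also reads off $\mathrm{Rank}[A\odot B]\le\mathrm{Rank}[A]\,\mathrm{Rank}[B]$, although this is not needed here.

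Equivalently — and perhaps cleaner to present without naming $C$ — one can verify nonnegativity of the quadratic form directly: for any $x\in\mathbb{C}^n$, substituting $B_{ij}=\sum_b\Phi_{ib}\overline{\Phi_{jb}}$,
\[
x^\dagger(A\odot B)\,x=\sum_{i,j}\overline{x_i}\,A_{ij}\,B_{ij}\,x_j=\sum_b\sum_{i,j}\overline{\bigl(x_i\,\overline{\Phi_{ib}}\bigr)}\;A_{ij}\;\bigl(x_j\,\overline{\Phi_{jb}}\bigr)=\sum_b y_b^\dagger A\,y_b\ \ge\ 0,
\]
where $(y_b)_i:=x_i\,\overline{\Phi_{ib}}$ and each summand is nonnegative because $A$ is positive semi-definite. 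There is no genuine obstacle in this argument: the only thing to watch is the bookkeeping of complex conjugates and the relabeling of the double index $(a,b)$ as a single column index of $C$. A purely cosmetic choice is whether to lead with the explicit factorization $A\odot B=CC^\dagger$ (which ties in directly with the Gram-matrix definition above and with the GHDO construction to follow) or with the quadratic-form computation; both are short, and I would present the former.
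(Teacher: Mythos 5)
Your proof is correct and takes essentially the same route as the paper's: both use the Gram decompositions $A=\Psi\Psi^\dagger$, $B=\Phi\Phi^\dagger$ and exhibit $A\odot B$ as the Gram matrix $CC^\dagger$ with $C_{i,(a,b)}=\Psi_{ia}\Phi_{ib}$, the paper merely encoding the column pair $(a,b)$ as a single index $k$. The explicit Hermiticity check and the alternative quadratic-form verification are fine additions but do not change the argument.
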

The proof is presented, for completeness, in appendix~\ref{sec:appendix-shur-product}.

In the discussion above, we have presented two approaches to construct positive semi-definite matrices: by (i) building a Gram matrix from a sequence of vectors or by (ii) using the Hadamard product. 
The Gram-Hadamard matrix is obtained by combining the two approaches in order to create a positive semi-definite matrix:

\begin{definition}[Gram-Hadamard matrix]\label{definition-gh}
The \textbf{Gram-Hadamard matrix} $S$ of a set of sequences of vectors $\{(\bm\psi_a^{(h)})_{a=1,\dots,R_h}\}_{h=1,\dots,K}$ is obtained by taking the Hadamard product of the Gram matrices of the sequences of vectors in the set:,
\begin{equation}
    S = \bigodot_{h=1}^K A^{(h)}
    \,[A^{(h)}]^\dagger,
\end{equation}
where $[A^{(h)}]_{i,a}=[\psi_{a}^{(h)}]_i$. More explicitely
\begin{equation}
    S_{i,j} = \prod_{h=1}^K \sum_{a=1}^{R_h}\psi_{i,a}^{(h)}\,
    [\psi_{j,a}^{(h)}]^*.
\end{equation}
where we have defined $\psi_{i,a}^{(h)}:=[\psi_a^{(h)}]_i$.
\end{definition}
The hermiticity and positive-definiteness of the \textbf{Gram-Hadamard matrix} follows from the Shur's product theorem. The maximum rank of $S$ is bounded exponentially in $K$, as $\text{Rank}[S] \leq \prod_{h=1}^K R_h$, therefore allowing to represent a positive semi-definite matrix of  exponential rank with a polynomial number of parameters.

For completeness, we also point out that is possible to further transform the output of the GHDO by element-wise application of a nonlinear function without losing the positive semi-definite property, provided that such function satisfies the condition described in \cref{sec:appendix-activation-function}. 


\subsection{Gram-Hadamard Density Operator}
\label{sec:gram-hadamard-ansatz}

\begin{figure}
    \centering
    \includegraphics{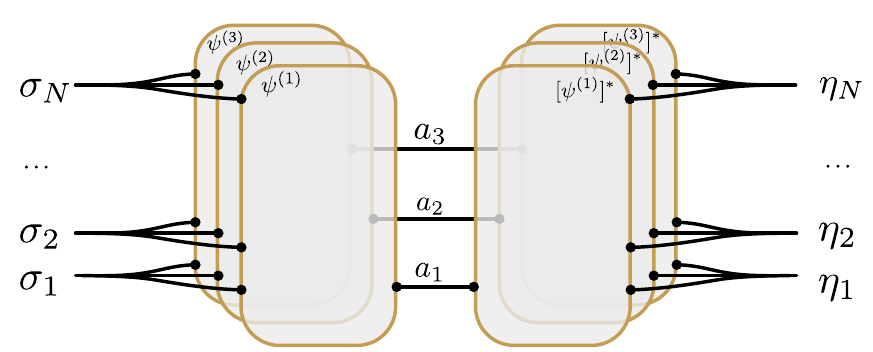}
    \caption{Schematic drawing of the Gram-Hadamard Density Operator defined in \cref{eq:def:gram-hadamard-ndo} using Tensor-Network notation. The drawing emphasizes that this ansatz is equivalent to the product of several plaquettes as large as the system, but with a limited rank $R=\dim[a_i]$.}
    \label{fig:schema_plaquette}
\end{figure}
\subsubsection{Motivation}
A density operator must satisfy the requirements of hermiticity and positive semi-definiteness in order to describe a physical state.
For this reason, we would like to devise a variational ansatz that automatically respects these two conditions.

A common way to achieve hermiticity and positive semi-definiteness is by using the so-called \textit{purification ansatz}, which constructs a positive semi-definite density operator  $\hat\rho:\hilb\to\hilb$ (acting on the the physical Hilbert space $\hilb$) by embedding $\hat{\rho}$ in an extended Hilbert space where the quantum state is pure, $|\psi\rangle\in\hilb \otimes\hilb_a$. 
Then, the ancillary space $\hilb_a$ is traced away in order to generate the mixedness,
\begin{equation}
    \bra{\sigma}\hat\rho\ket{\eta} = \sum_{a=1}^{\text{dim}[\hilb_a]} \langle \sigma,a|\psi\rangle\,\langle\psi|\eta,a\rangle,
\end{equation}
where $\{|\sigma\rangle\}$ ($\{|a\rangle\}$) is a basis of $\hilb$ ($\hilb_a$).
If we introduce the matrix $A_{\sigma,a}=\langle \sigma,a|\psi\rangle$, then $\hat\rho=A \,A^\dagger$ is a Gram matrix and its maximum rank is bounded by $\dim[\hilb_a]$.

A good variational ansatz for the density matrix must satisfy the \textit{query-access} property, meaning that matrix elements $ \bra{\sigma}\hat\rho\ket{\eta} $ must be computable in polynomial time. The entropy of a generic mixed quantum state will be linear in system size, which means that  $\text{dim}[\mathcal{H}_a]$ must be exponential in system size. 
This requires that the trace over the ancilla be performed either analitically, which is possible in some special case, or by randomly sampling the ancilla state. 
However, sampling the acilla is efficient only if the wave function is non-negative, an assumption that is generically not verified, especially for time-evolved states.


\subsubsection{Introducing the GHDO}
\begin{definition}[GHDO] Following Definition~\ref{definition-gh}, we introduce the \textbf{Gram-Hadamard Density Operator} (GHDO) as
\begin{equation}
    \label{eq:def:gram-hadamard-ndo}
    \bra{\sigma}\hat\rho\ket{\eta} = \prod_{h=1}^K \sum_{a=1}^{R}\psi_{\sigma,a}^{(h)}\, [\psi_{\eta,a}^{(h)}]^*,
\end{equation}
where $\psi_{\sigma,a}^{(h)}\in\mathbb{C}$. 
\end{definition}

The GHDO is a hermitian,  positive semi-definite density matrix, whose elements are calculable in polynomial time in $K$ and $R$, and whose rank is bounded exponentially by $R^K$.
We remark that it is possible to exactly represent the maximally mixed state of a system of $N$ spins, $\hat{\rho}=\frac{\mathbb{I}}{2^N}$, with an GHDO with $R\geq2$ and $K\geq N$.

$\psi_{\sigma, a}^{(h)}$ can then be parametrized with an unconstrained and arbitrarily-deep neural network. 
It is possible to show that this form is more general than the NDOs form of Ref.~\cite{torlai2018latent}, which can be reproduced by a GHDO with $R=2$ and $\psi_{\sigma, a=1}^{(h)} = 1$.

\subsection{Autoregressive Gram-Hadamard Density Operator}
\label{sec:autoreg-ansatz}
\subsubsection{Motivation}
\begin{figure}
    \centering
    \includegraphics{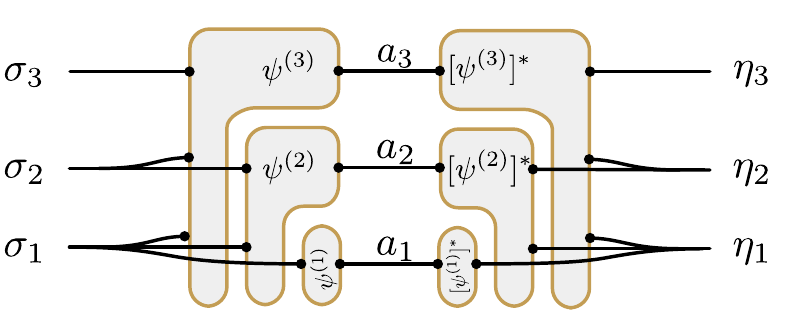}
    \caption{Schematic drawing of the Autoregressive Gram-Hadamard Density Operator as defined by \cref{eq:definition-aghdo} using Tensor-Network notation.
    The diagram emphasizes the autoregressive order.}
    \label{fig:schema_autoregressive}
\end{figure}
The density operator is used to calculate expectation values of operators.
An efficient way of doing so for a {$k$-local} operator $\hat{A}$ is by obtaining a sample $\sigma$ in the computational basis, and analytically calculating the non-diagonal contribution of $\hat{A}$
\begin{equation}
    \langle\hat{A}\rangle := \frac{\text{Tr}\;\hat{\rho}\,\hat{A}}{\text{Tr}\;\hat{\rho}} = 
    \frac{\sum_{\sigma}\langle \sigma | \hat{\rho}|\sigma\rangle 
    \;A_{\text{loc}}(\sigma)
    }{\sum_\sigma \langle\sigma|\hat{\rho}|\sigma\rangle} = \ExpectedValue_{\sigma\sim p(\sigma)}\left[ A_{\text{loc}}(\sigma)\right],
\end{equation}
where
\begin{equation}
A_{\text{loc}}(\sigma):=\sum_{\eta}\frac{\langle \sigma|\hat{A}|\eta\rangle\langle\eta|\hat{\rho}|\sigma\rangle}{\langle\sigma|\hat{\rho}|\sigma\rangle},
\end{equation}
\begin{equation}
    p(\sigma) := \frac{\langle\sigma|\hat{\rho}|\sigma\rangle}{\sum_{\eta}\langle\eta|\hat{\rho}|\eta\rangle},
\end{equation}
and where $\ExpectedValue$ denotes the classical expected value.
For this task, we need to draw samples from the probability distribution $p(\sigma)$ defined by the diagonal of the density matrix.

The GHDO we have introduced in \cref{sec:gram-hadamard-ansatz} allows to draw samples from $p(\sigma)$ with the Markov-chain Monte Carlo method. However, such sampling techniques are prone to thermalization and autocorrelation problems when used in variational contexts, as $\hat{\rho}$ changes at each optimization step. 
To avoid such problems one traditionally resorts to drawing a  very large number of samples for a single step, which can be computationally expensive.
For these reasons, it would be convenient to restrict the GHDO to a form that allows direct sampling, that is, drawing completely uncorrelated samples from $p(\sigma)$.
A way of achieving direct sampling is rewriting the probability distribution $p(\sigma)$ over $N$ spins in the autoregressive form, i.e. as a product of conditional probabilities
\begin{equation}
    p(\sigma) = \prod_{h=1}^{N} p(\sigma_h|\sigma_{<h})
\end{equation}
where $\sigma=(\sigma_1,\dots,\sigma_{N})$ and $\sigma_{<i}=(\sigma_1,\dots,\sigma_{i-1})$. It is possible to directly sample a probability distribution in the autoregressive form with the 
the following technique: we first sample $\sigma_1$ from $p(\sigma_1|\emptyset)$, we then use $\sigma_1$ to sample $\sigma_2$ from $p(\sigma_2|\sigma_1)$, and so on. Autoregressive neural-network wave functions based on the autoregressive property were originally proposed by Sharir and coworkers~\cite{sharir2020deep}.

\subsubsection{Introducing the AGHDO}
\label{sec:autoreg-ghdo}
The previous section motivates the introduction of the {\bf Autoregressive Gram-Hadamard Density Operator} (AGHDO) ansatz for a system of $N$ spins:
\begin{definition}[AGHDO]
\begin{equation}\label{eq:definition-aghdo}
    \langle \sigma|\hat{\rho}|\eta\rangle = \prod_{h=1}^N\sum_{a=1}^R \psi_{\sigma_{\le h},a}\,\left[\psi_{\eta_{\le h},a}\right]^*
\end{equation}
where ${\sigma=(\sigma_1,\dots,\sigma_N)}$ is the spin configuration, ${\sigma_{\le h}=(\sigma_0,\dots,\sigma_h)}$, and, for each $1\le h\le N$, we impose
\begin{equation}\label{eq:constraint-aghdo}
\sum_{\sigma_h\in\{-1,1\}}    \sum_{a=1}^R|\psi_{\sigma_{\le h},a}|^2 = 1
\end{equation}
for all values of $\sigma_{<h}$. With this choice, we have $\text{Tr}\;\hat{\rho}=1$ and 
\begin{equation}
    \langle \sigma|\hat{\rho}|\sigma\rangle = \prod_{h=1}^{N}p(\sigma_h|\sigma_{<h}),
\end{equation}
where
\begin{equation}
    p(\sigma_h|\sigma_{<h})=\sum_{a=1}^R |\psi_{\sigma_{\le h},a}|^2.
\end{equation}
We refer to $R$ as the local rank of the AGHDO.
\end{definition}



To impose the constraint of \cref{eq:constraint-aghdo} it is sufficient to use an unconstrained $\phi_{\sigma_{\le h},a}$ and then normalize it with a computational cost linear in the local Hilbert-space dimension:
\begin{equation}
    \psi_{\sigma_{\le h},a}=\frac{\phi_{\sigma_{\le h},a}}{\sqrt{\sum_{\sigma_h}\sum_a|\phi_{\sigma_{\le h},a}|^2}}.
\end{equation}
In the numerical implementation, the function $\phi_{\sigma_{\le h},a}$ is represented by an arbitrary neural network.

\subsubsection{Properties}
The main hyperparameter of the AGHDO ansatz introduced in \cref{eq:definition-aghdo} is the ``local rank'' $R$ of the density operator.
In order to perform controlled numerical calculations, it is possible to verify that as $R$ gets larger, one is able to represent all relevant quantum states, as proven in \cref{sec:appendix-universal}.

We remark that when the local rank $R$ is equal to one, the rank of the AGHDO is one, which means that the density operator is a pure state.
It is easy to verify that the quantum state one obtains in this case is equivalent to the autoregressive neural wave function introduced in~\cite{sharir2020deep}.
From the known properties of the wave function in the autoregressive form, this correspondence shows that the AGHDO with $R=1$ can represent an arbitrary pure state.

Another limiting interesting case are classical states $\hat{\rho}_{\text{cl}}$, which are diagonal in the computational basis,
\begin{equation}
    \label{eq:definition-classical}
    \langle \sigma|\hat{\rho}_{\text{cl}}|\eta\rangle = p(\sigma)\;\delta_{\sigma,\eta}.
\end{equation}
A classical state of $N$ spins generically has a rank exponential in $N$.
We remark that it can be exactly represented by an AGHDO with a constant local rank $R=2$, as discussed in \cref{sec:appendix-classical}.

\section{Variational Time Evolution}
\label{sec:tdvp}

\subsection{Time evolution of mixed quantum states}
In this work we focus on the challenging problem of the numerical simulation of the dynamics of a mixed many-body quantum state interacting with a Markovian bath, which is described by an ordinary differential equation for the density operator $\hat{\rho}$
\begin{equation}\label{eq:time-evolution-superoperator}
    \frac{d\hat{\rho}}{dt} = \mathcal{L}\hat{\rho},
\end{equation}
where $\mathcal{L}$, the Liouvillian, is a {\it super-operator}, i.e. a linear operator acting on operators, defined by its action on $\hat{\rho}$, 
\begin{equation}
    \label{eq:lindblad}
    \mathcal{L}\hat{\rho}= -i\comm{\hat{H}}{\hat{\rho}} -\frac{1}{2}\sum_i\acomm{\hat{L}_i^\dagger \hat{L}_i}{\hat{\rho}} + \sum_i\hat{L}_i\hat{\rho}\hat{L}_i^\dagger,
\end{equation}
where $\hat{H}$ is the hamiltonian and $\hat{L}_i$ are the jump operators.
The discussion of this section also applies to unitary evolution, which can be obtained in this formalism by setting the jump operators $L_i$ to zero, and to  imaginary-time evolution, which requires the superoperator $\mathcal{L}_{\text{imag}}\hat{\rho}=-\{\hat{H},\hat{\rho}\}$.

An exact numerical simulation of \cref{eq:time-evolution-superoperator} for $N$ spins would require time and memory computational resources scaling like $2^{2N}$, and it is therefore intractable even for a moderate number of spins.
When working with variational density operators $\hat{\rho}_{\bm{w}}$, parametrized by a vector $\bm w\in\mathbb{R}^d$, the McLachlan variational principle can be used to recast \cref{eq:time-evolution-superoperator}, which has $2^{2N}$ terms, to a $d$-dimensional differential equation for the variational parameters $\bm w$~\footnote{This equation is valid for both real and complex parameters. See~\cite{Yuan2019TDVP}, Table 1 and following discussions.} 
\begin{equation}\label{eq:variational-time-evolution}
    \frac{d\bm{w}}{dt} = S^{-1}\bm{F},
\end{equation}
where the so-called quantum geometric tensor $S$ and the vector $\bm F$ can be stochastically estimated as follows
\begin{align}
    \label{eq:S-matrix-def}
    S_{i,j} &= \ExpectedValue_{\rho^2}[O_{i}^*(\sigma, \eta)O_j(\sigma, \eta)]-
    \ExpectedValue_{\rho^2}[ O_{i}^*(\sigma, \eta)]\ExpectedValue_{\rho^2}[O_j(\sigma, \eta)], \\
    \label{eq:F-vector-def}
    F_{i} &= \ExpectedValue_{\rho^2}[ O_{i}^*(\sigma, \eta)\mathcal{L}_\text{loc}(\sigma, \eta)] - \ExpectedValue_{\rho^2}[ O_{i}^*(\sigma, \eta)]\ExpectedValue_{\rho^2}[\mathcal{L}_\text{loc}(\sigma, \eta)],
\end{align}
where $\ExpectedValue_{\rho^2}[\cdot] = \ExpectedValue_{\sigma,\eta\sim\abs{\rho}^2(\sigma, \eta)}[\cdot] $ is the classical average over the $\abs{\rho}^2(\sigma, \eta)$ distribution
\begin{equation}\label{eq:average-superoperator}
    \ExpectedValue_{\rho^2}[f(\sigma,\eta)] := \frac{\sum_{\sigma,\eta}|\langle\sigma|\hat{\rho}|\eta\rangle|^2\,f(\sigma,\eta)}{\sum_{\sigma,\eta}|\langle\sigma|\hat{\rho}|\eta\rangle|^2}
\end{equation}
$O_{i}(\sigma, \eta)=\partial_{w_i}\log[\rho(\sigma, \eta)]$ is the logarithmic derivative with respect to the parameters, and the local estimator of the Liouvillian $\mathcal{L}_\text{loc}$ has the form
\begin{equation}
    \label{eq:L-loc}
    \mathcal{L}_{\text{loc}}(\sigma,\eta) := \frac{\bra{\sigma}\mathcal{L}\hat{\rho}\ket{\eta}}{\bra{\sigma}\hat{\rho}\ket{\eta}}.
\end{equation}

In this work, we obtain the steady-state density operator by performing the long-time simulation of \cref{eq:variational-time-evolution}, instead of using a variational principle for the steady state as was done, e.g., in Ref.~\cite{vicentini2019prl}.
To reduce the computational cost we heavily regularise \cref{eq:variational-time-evolution} in order to obtain weak-convergence~\footnote{Weak convergence, in the sense of Stochastic Differential Equations, means that we converge to the correct steady-state but the solution at intermediate times does not necessarily represents the physical dynamics.} to the steady-state as originally proposed in Ref.~\cite{Nagy2019PRL}.

\subsection{Sampling superoperator averages}
We have shown in \cref{sec:autoreg-ghdo} that it is possible to directly sample configurations $\sigma\sim\rho(\sigma, \sigma)$ from the diagonal of the AGHDO.
However, to perform the variational time evolution, samples from $\abs{\bra{\sigma}\hat{\rho}\ket{\eta}}^2$ are needed in order to estimate $S$ and $\bm{F}$ according to \cref{eq:S-matrix-def,eq:F-vector-def}.

We first remark that for a Trace-1 density operator such as the AGHDO, the normalization of the superoperator probability distribution, $\sum_{\sigma,\eta}|\langle\sigma|\hat{\rho}|\eta\rangle|^2=\text{Tr}\,\hat{\rho}^2$, is the purity of the density operator, and it is generically different from one.
However, this is irrelevant for  \cref{eq:variational-time-evolution} as the solution is independent of such normalization constant. 

We remark that, for a pure-state, the superoperator probability distribution factors as $|\langle\sigma|\hat{\rho}|\eta\rangle|^2 = p(\sigma)\,p(\eta)$
, while for a classical state we have $|\langle\sigma|\hat{\rho}|\eta\rangle|^2 = p(\sigma)^2\,\delta_{\sigma,\eta}$.
This motivates the use of a probability distribution that is intermediate between these two distributions, and a natural choice is the convex combination of \textit{conditional} probabilities, which depends on the convexity parameter $\alpha\in[0,1]$
\begin{equation}
    p_{\alpha}(\sigma,\eta) :=p(\sigma) \prod_{h=1}^N\left[\alpha \,p(\eta_h\,|\,\eta_{<h})+(1-\alpha)\,\delta_{\sigma_h,\eta_h}\right].
\end{equation}
We then use the distribution $p$ to perform importance sampling of the superoperator averages
\begin{equation}
    \label{eq:sampling-rho2-hybridization}
    \ExpectedValue_{\rho^2}[ f(\sigma,\eta)]\propto\ExpectedValue_{\sigma,\eta\sim p_\alpha(\sigma,\eta)}\left[\frac{|\langle\sigma|\hat{\rho}|\eta\rangle|^2}{p_\alpha(\sigma,\eta)}\;f(\sigma,\eta)\right].
\end{equation}
The probability distribution $p_\alpha$ allows direct sampling (see appendix~\ref{sec:appendix-hybridization}). When $p_\alpha(\sigma,\eta)$ is zero, $|\langle\sigma|\hat{\rho}|\eta\rangle|^2$
is also zero (more precisely $|\langle\sigma|\hat{\rho}|\eta\rangle|^2/p_{\alpha}(\sigma,\eta)\le \alpha^{-N}$), and the variance of the estimator in \cref{eq:sampling-rho2-hybridization} is bounded.

As the variance of the estimator depends on the purity of the state, one can improve the statistical estimate of observables by taking $\alpha\rightarrow 0$ when the purity is small, and $\alpha\rightarrow 1$ when the purity approaches 1.

\section{Numerical results}
\label{sec:results-ising}

To prove the viability of the AGHDO ansatz, we consider the dissipative transverse-field Ising model, defined by the hamiltonian
\begin{figure}
    \centering
    \includegraphics[width=\linewidth]{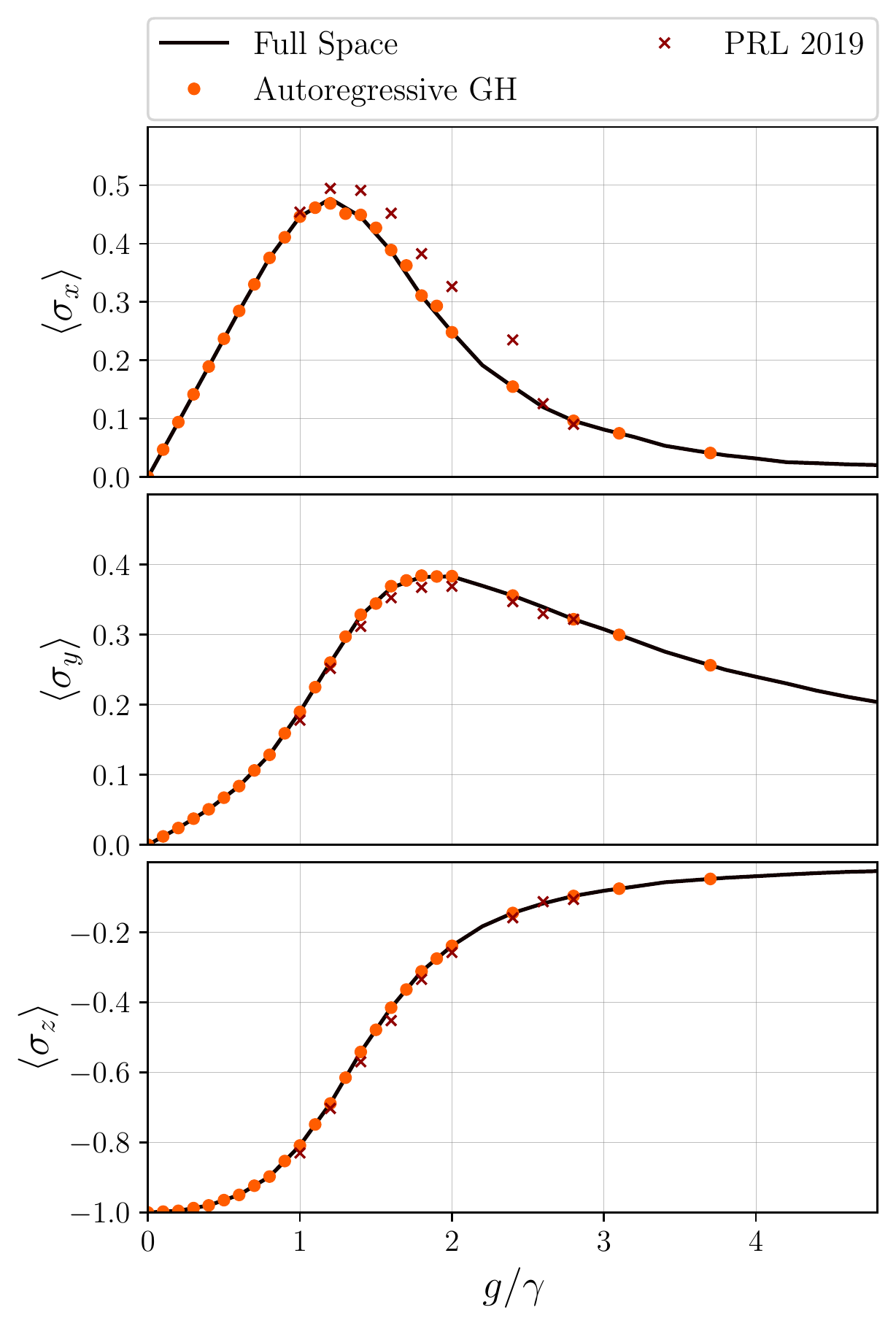}
    \caption{Average magnetization along the 3 axes $\hat{x}$, $\hat{y}$ and $\hat{z}$ in a 1D periodic chain with 16 sites, computed with the Autoregressive Gram-Hadamard Density Operator and compared with the results obtained using the Purified NDO in \cite{vicentini2019prl}. For the points in $g\in[1.0, 2.5]$ we used a 3 layers Networks with feature densities $[8,4,4]$ and local rank $R=32$. For the other points we used $R=8$. We remark that we considerably improve on the previous results in the region where the previous architecture was failing. Other parameters are $V/\gamma=2$.}
    \label{fig:ising_magnetization}
\end{figure}

\begin{equation}
    \hat{H} = \frac{V}{4}\sum_{\langle i,j\rangle} \hat{\sigma}^{z}_i\hat{\sigma}^{z}_j + \frac{g}{2}\sum_i \hat{\sigma}^{x}_i,
\end{equation}
and the local jump operators $\hat{L}_i=\sqrt{\gamma}\,\hat{\sigma}_i^{-}$ acting on each site according to the Lindblad Master Equation \cref{eq:lindblad}.
The model is believed to exhibit a first-order dissipative phase transition in $D\geq2$ when scanning the transverse field $g$~\cite{Jin2018PRBIsing,Daniel2021PRAIsingField}. 

Due to the numerical challenge that this model provides, the one-dimensional periodic chain with $N=16$ has already been used as a benchmark for NDOs in the past~\cite{vicentini2019prl,Luo2022PRL}.
We pick the same parameters of $V/\gamma=2$ used in Ref.~\cite{vicentini2019prl} and we simulate a heavily-regularized dynamics until convergence to the steady-state is achieved.
We expect the points in the interval  $g/\gamma\in[1.0, 2.5]$ to be the hardest to simulate, as the gap of the Liouvillian is smaller in this region \cite{Minganti2018Spectral}.

\begin{figure}
    \centering
    \includegraphics[width=\linewidth]{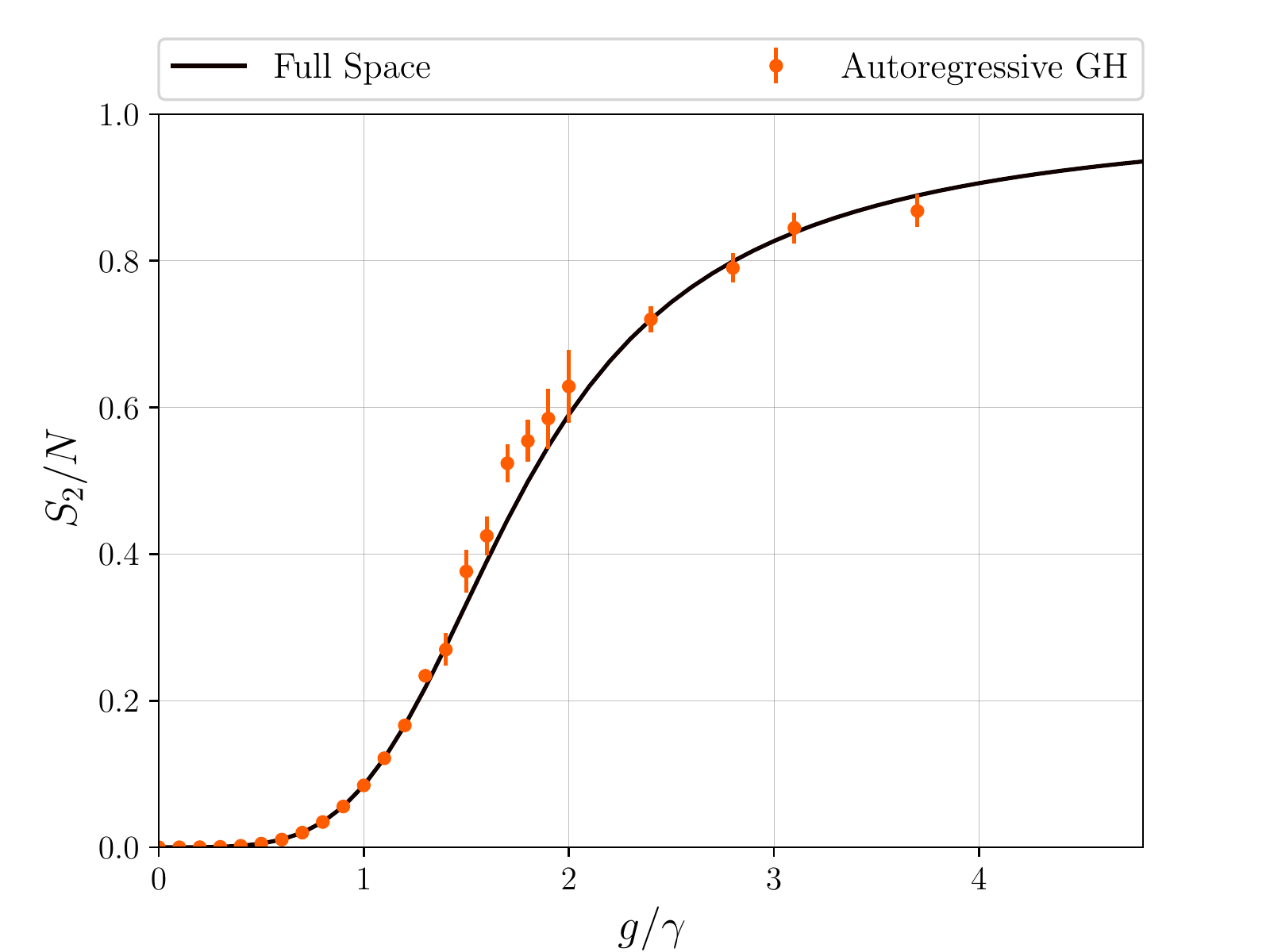}
    \caption{R\'enyi-2 entropy $S_2 = - \log_2[\Tr[\rho^\dagger\rho]]$ computed in the steady-state of a $1D$  periodic chain with $16$ sites of the Transverse-Field Ising Model.
    The error bar represents the statistical error in the estimation of the observable using $2^{15}$ samples. Parameters are the same as in \cref{fig:ising_magnetization}.}
    \label{fig:ising_reny2}
\end{figure}

For points outside of the interval $g/\gamma\in[1.0, 2.5]$ we used an Autoregressive Gram-Hadamard Density Operator with 2 masked dense layers, feature density $[8,4]$, and local rank $R=8$.
For points inside the interval we considered an extra layer, obtaining a feature density of $[8,4,4]$, and increased the local rank to $R=32$.
We sample the probability distribution $p_{\alpha}$ with $\alpha\in\{0.2, 0.5, 0.8\}$ depending on the purity of the state. The total number of samples for the time-evolution varies between $2^{12}$ and $2^{15}$, and the timestep is $10^{-3}$.

To assess the accuracy of the approximation of the reduced 1-spin density matrix, we present in \cref{fig:ising_magnetization} a comparison of the average magnetization along the three axes between the AGHDO and the results of Ref.~\cite{vicentini2019prl}.
The data shows a clear improvement over the shallow ansatz used in that reference.
We also investigated the performance in the approximation of global properties of the density matrix, by estimating the base-2 R\'enyi-2 entropy
\begin{equation}
    S_2 = -\log_2[\Tr[\hat{\rho}^\dagger\hat{\rho}]],
\end{equation}
which is proportional to the impurity of the state (see \cref{fig:ising_reny2}).
Comparing against full-space simulations we show that the AGHDO can capture the mixed-nature of the state to very high accuracy. 
As expected, the importance-sampling distribution we use, $p_\alpha$, works best for high- and low- purity states, while it has higher variance in the intermediate region.

\section{Conclusions}
\label{sec:conclusion}

In this work we have introduced a family of  neural-network architectures, the Gram-Hadamard density operators, which can be made arbitrarily deep while preserving the properties of physical density matrices, namely positive semi-definiteness and rank exponential in the system size.
We have also shown that it is possible to add an autoregressive structure to the GHDO that allows direct sampling of the diagonal of the density matrix.
Finally, we have presented numerical evidence that such construction is more expressive than previous variational architectures by considering the case of the dissipative transverse-field Ising model, where we have shown that we can accurately compute local observables and satisfactorily reproduce a global property, the purity of the density operator.

Future developments could investigate the integration of recurrent cells in the AGHDO, as well as the addition of the nonlinear functions that we briefly discussed in \cref{sec:appendix-activation-function}.

\section{Aknowledgements}
F.V. thanks Borgo Eibn for the original inspiration of this work and Zakari Denis for relevant discussions.

\appendix

\section{Proof of Shur's Product theorem}
\label{sec:appendix-shur-product}
{\bf Theorem:}
If $A$ and $B$ are positive semi-definite,
$A\odot B$ is positive semi-definite.
\begin{proof}
By using the Gram decomposition of the $n\times n$ matrices $A$ and $B$, we can write
\begin{equation}
    (A\odot B)_{ij}
    =(CC^\dagger\odot D D^\dagger)
    _{ij} =  \sum_{k=0}^{n^2-1}E_{ik}\,E_{jk}^*,
\end{equation}
where $E_{ik}:=C_{i,\lfloor k/n\rfloor+1}\;D_{i,(k\,\text{mod}\,n)+1}$.
\end{proof}

\section{Positive semi-definite matrices by application of a ``positive nonlinear function''}\label{sec:appendix-activation-function}
For completeness, we briefly discuss an interesting additional way to create a positive semi-definite matrix by element-wise application of a special type of nonlinear functions. This could be used to possibly further reduce the local rank of the AGHDO.
\begin{proposition}
Let $f:\mathbb{C}\to\mathbb{C}$ be an analytic function in a disk of radius $R$ around the origin. We can then write $f(z)=\sum_{l=0}^{\infty}a_l\,z^l$ for $|z|<R$. Suppose further that $a_l\ge 0$ for all $l$. Let $A$ be a positive semi-definite hermitian matrix such that $\max_{i,j}|A_{i,j}|<R$. Let $B$ be a matrix of the same dimensions of $A$ defined by
\begin{equation}
    B_{i,j}=f(A_{i,j})
\end{equation}
Then, $B$ is positive semi-definite and hermitian.
\end{proposition}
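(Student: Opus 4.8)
The plan is to expand $f$ in its power series and recognize $B$ as a convergent nonnegative-coefficient combination of Hadamard powers of $A$, each positive semi-definite by Schur's product theorem, and then pass to the limit using the closedness of the positive semi-definite cone. Concretely, let $J$ denote the all-ones matrix of the same size as $A$; it is positive semi-definite since $J=\mathbf{1}\mathbf{1}^\dagger$ is a Gram matrix. Define the $l$-th Hadamard power by $A^{\odot 0}=J$ and $A^{\odot l}=A\odot A^{\odot(l-1)}$, so that $(A^{\odot l})_{i,j}=(A_{i,j})^{l}$. Iterating Schur's product theorem, $A^{\odot l}$ is hermitian and positive semi-definite for every $l\ge 0$.

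Next I would handle convergence. Writing $\rho:=\max_{i,j}|A_{i,j}|<R$, for each pair $(i,j)$ the series $\sum_{l}a_l (A_{i,j})^{l}$ converges absolutely, being dominated by $\sum_l a_l\rho^{l}<\infty$, and its value is precisely $B_{i,j}=f(A_{i,j})$. Hence the partial sums $B^{(L)}:=\sum_{l=0}^{L}a_l A^{\odot l}$ converge entrywise to $B$ as $L\to\infty$. Each $B^{(L)}$ is a nonnegative linear combination of hermitian positive semi-definite matrices, hence hermitian positive semi-definite. For any vector $v$ of the appropriate dimension, $v^\dagger B v=\lim_{L\to\infty}v^\dagger B^{(L)}v\ge 0$, since the quadratic form is a finite linear combination of the matrix entries and therefore commutes with the entrywise limit. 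This shows $B$ is positive semi-definite.

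Finally, hermiticity of $B$ follows either from each $B^{(L)}$ being hermitian together with the continuity of the limit, or directly: since $a_l\in\mathbb{R}$ we have $f(\bar z)=\overline{f(z)}$ on the disk, so from $A_{j,i}=\overline{A_{i,j}}$ we obtain $B_{j,i}=f(A_{j,i})=f(\overline{A_{i,j}})=\overline{f(A_{i,j})}=\overline{B_{i,j}}$. (Note also that the diagonal entries $A_{i,i}$ are real and nonnegative because $A$ is positive semi-definite, so $f(A_{i,i})$ is real, consistently.)

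There is no deep obstacle here; the only point needing a little care is the interchange of the infinite summation with the evaluation of the quadratic form $v^\dagger(\cdot)\,v$, which is legitimate because the absolute convergence guaranteed by $\rho<R$ makes $B^{(L)}\to B$ entrywise, and the quadratic form depends continuously (indeed linearly) on the finitely many matrix entries. The closedness of the positive semi-definite cone then finishes the argument.
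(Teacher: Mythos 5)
Your proof is correct and follows essentially the same route as the paper's: truncate the power series, note each partial sum is positive semi-definite by iterated Schur products with nonnegative coefficients, and pass to the limit using the continuity of the quadratic form in the matrix entries. The only cosmetic difference is that you argue via entrywise convergence and closedness of the PSD cone, while the paper phrases the same limit step with an explicit $\epsilon$ bound on the truncation error.
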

\begin{proof}
We consider the truncated $B^{(k)}$ matrix defined by
\begin{equation}
    B_{i,j}^{(k)} = \sum_{l=0}^k a_l\,A_{i,j}^l
\end{equation}
$B^{(k)}$ is positive semi-definite and hermitian as the sum of hadamard products of positive semi-definite hermitian matrices. Let $A_M:=\max_{i,j}|A_{i,j}|$, $A_M<R$. For $\epsilon>0$, let $k_\epsilon$ such that ${|f(z)-\sum_{l=0}^{k_\epsilon}a_l\,z^l|\le\epsilon}$ for all $|z|\le A_M$. Let $n$ be the linear dimension of $A$, and let $v=(v_1,\dots,v_{n})$ such that $\sum_j |v_j|^2=1$. One has
\begin{equation}
    0\le\langle v| B^{(k_\epsilon)}|v\rangle \le \langle v| B|v\rangle +n\,\epsilon  
\end{equation}
Therefore, $\langle v|B|v\rangle/\langle v|v\rangle \ge -n\epsilon$, for all $v$ and for all $\epsilon$, which is equivalent to saying that $B$ is positive semi-definite.
\end{proof}

\section{Additional properties of the Autoregressive Gram Hadamard Density Operator}
\subsection{An AGHDO with $R=2$ can represent any classical mixed state}\label{sec:appendix-classical}
\begin{proposition}
Let 
\begin{equation}
    \langle \sigma|\hat{\rho}_{\text{cl}}|\eta\rangle = p(\sigma)\;\delta_{\sigma,\eta}.
\end{equation}
Then, it can be exactly represented with an AGHDO with $R=2$.
\end{proposition}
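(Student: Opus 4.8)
The plan is to construct the tensor $\psi$ explicitly and then check the three conditions that characterize an AGHDO. First I would factor the target over the autoregressive index. Writing $\delta_{\sigma,\eta}=\prod_{h=1}^{N}\delta_{\sigma_h,\eta_h}$ and introducing the conditionals $p(\sigma_h\,|\,\sigma_{<h})=p(\sigma_{\le h})/p(\sigma_{<h})$ on every prefix with $p(\sigma_{<h})>0$ (and, say, $p(\sigma_h\,|\,\sigma_{<h})=1/2$ on the remaining prefixes), a short induction over $h$ gives $\prod_{h=1}^{N}p(\sigma_h\,|\,\sigma_{<h})=p(\sigma)$ for every $\sigma$, since the product telescopes on the tree of positive-marginal prefixes and contains a genuinely vanishing factor the first time one steps off it. Hence it suffices to make the $h$-th Gram factor of \cref{eq:definition-aghdo} equal, for $\eta_{<h}=\sigma_{<h}$, to the $2\times2$ diagonal matrix $\mathrm{diag}\!\big(p(-1\,|\,\sigma_{<h}),\,p(+1\,|\,\sigma_{<h})\big)$ in the indices $(\sigma_h,\eta_h)$.

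Next I would exhibit the choice that does this with local rank $R=2$: fixing a bijection $\iota:\{-1,+1\}\to\{1,2\}$, set
\begin{equation}
    \psi_{(\sigma_{<h},\,s),\,a}=\sqrt{\,p(s\,|\,\sigma_{<h})\,}\;\delta_{a,\,\iota(s)},\qquad s\in\{-1,+1\},
\end{equation}
so that the ``slot'' $a=1$ carries the $\sigma_h=-1$ branch and $a=2$ the $\sigma_h=+1$ branch. Then
\begin{equation}
    \sum_{a=1}^{2}\psi_{\sigma_{\le h},a}\,[\psi_{\eta_{\le h},a}]^{*}=\sqrt{\,p(\sigma_h\,|\,\sigma_{<h})\,p(\eta_h\,|\,\eta_{<h})\,}\;\delta_{\sigma_h,\eta_h},
\end{equation}
which for $\sigma_{<h}=\eta_{<h}$ is exactly $p(\sigma_h\,|\,\sigma_{<h})\,\delta_{\sigma_h,\eta_h}$, the wanted factor; moreover the normalization \cref{eq:constraint-aghdo} holds automatically, since $\sum_{s}\sum_{a}|\psi_{(\sigma_{<h},s),a}|^{2}=p(-1\,|\,\sigma_{<h})+p(+1\,|\,\sigma_{<h})=1$, so $\mathrm{Tr}\,\hat\rho=1$.

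Then I would assemble the product. For $\sigma=\eta$ every factor equals $p(\sigma_h\,|\,\sigma_{<h})$, so $\langle\sigma|\hat\rho|\sigma\rangle=\prod_{h}p(\sigma_h\,|\,\sigma_{<h})=p(\sigma)$. For $\sigma\neq\eta$, let $h^{\star}$ be the smallest index with $\sigma_{h^\star}\neq\eta_{h^\star}$; then $\sigma_{<h^\star}=\eta_{<h^\star}$, the $h^\star$-th factor equals $p(\sigma_{h^\star}\,|\,\sigma_{<h^\star})\,\delta_{\sigma_{h^\star},\eta_{h^\star}}=0$, and the whole product vanishes. This gives $\langle\sigma|\hat\rho|\eta\rangle=p(\sigma)\,\delta_{\sigma,\eta}$, i.e. the classical state \cref{eq:definition-classical} is represented by an AGHDO with $R=2$. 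I would close with a one-line remark that $R=2$ is also necessary for generic $p$: an $R=1$ AGHDO is a pure state (rank one), whereas a full-support classical state has rank $2^{N}$.

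I do not expect a genuine obstacle. The only point needing care is that $\psi_{\sigma_{\le h},a}$ really depends on the entire prefix $\sigma_{<h}$ (through the conditional) even though the $h$-th factor must still annihilate the cross terms $\sigma_h\neq\eta_h$ --- it is precisely this requirement that forces two orthogonal slots, hence $R=2$ rather than $R=1$ --- together with the bookkeeping that conditionals assigned on zero-marginal prefixes are harmless, because those $\sigma$ already have $\langle\sigma|\hat\rho|\sigma\rangle=0$ from an earlier vanishing factor.
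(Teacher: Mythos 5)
Your construction is exactly the paper's: the choice $\psi_{(\sigma_{<h},s),a}=\sqrt{p(s\,|\,\sigma_{<h})}\,\delta_{a,\iota(s)}$ coincides with the paper's $\psi_{\sigma_{\le h},a}=\delta_{2a-3,\sigma_h}\sqrt{p(\sigma_h|\sigma_{<h})}$ under the bijection $\iota(-1)=1$, $\iota(+1)=2$. The proof is correct and simply spells out the verification (telescoping of conditionals, vanishing of off-diagonal terms, handling of zero-marginal prefixes) that the paper leaves as "easy to verify."
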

\begin{proof}
We define, for $a\in\{1, 2\}$ and $\sigma_j\in\{-1,1\}$
\begin{equation}\label{eq:definition-psi-classical}
    \psi_{\sigma_{\le h},a}=\delta_{2a-3,\sigma_h}\,\sqrt{p(\sigma_h|\sigma_{<h})}
\end{equation}
where $p(\sigma_h|\sigma_{<h})$ is the conditional probability of $p(\sigma)$ defined in \cref{eq:definition-classical}. It is easy to verify that the AGHDO introduced in \cref{eq:constraint-aghdo} with $R=2$ and $\psi_{\sigma_{\le h},a}$ defined in \cref{eq:definition-psi-classical}, exactly represents the classical state of \cref{eq:definition-classical}.
\end{proof}

\subsection{An AGHDO can represent any mixed quantum state for large-enough $R$}\label{sec:appendix-universal}
We have seen that an AGHDO with $R=1$ is able to represent the ``most quantum'' state, a pure state, and by increasing $R$ to two it is also to represent an arbitrary classical state. We could be lead to think that $R=2$ should be enough to represent any other quantum state ``intermediate'' between these two cases, but a simple counting argument shows that this is not possible. Indeed, the number of degrees of freedom of an AGHDO of $N$ spins is of the order of $R\;2^N$, while a generic density operator has of the order of $4^N$ degrees of freedom. Therefore, $R$ must be, at least, of the order of $2^N$ to represent any quantum state. The following result proves that $R=2^N$ is enough to represent any quantum mixed state.

\begin{proposition}
Let $\hat{\rho}$ be a trace-one density operator of a system of $N$ spins. Then, $\hat{\rho}$ can be exactly represented by an AGHDO with $R=2^N$.
\end{proposition}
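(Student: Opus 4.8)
The plan is to \emph{peel off one physical spin at a time}: I would route the chain-rule factorization of the diagonal probability distribution into the first $N-1$ factors of \cref{eq:definition-aghdo}, chosen to be rank one, and leave \emph{all} of the off-diagonal (coherence) structure of $\hat\rho$ in the last factor, whose local rank we are now permitted to take as large as $2^N$. The only nontrivial point to verify will be that this last factor is still positive semi-definite, and this will follow from a single application of Shur's product theorem (equivalently, from a diagonal congruence).

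Concretely, write $p(\sigma):=\langle\sigma|\hat\rho|\sigma\rangle$ for the diagonal distribution, which is a genuine probability distribution on $\{-1,1\}^N$ since $\hat\rho\succeq 0$ and $\mathrm{Tr}\,\hat\rho=1$; let $p(\sigma_{<h})$ be its marginal on the first $h-1$ spins and, when $p(\sigma_{<h})>0$, let $p(\sigma_h\,|\,\sigma_{<h})=p(\sigma_{\le h})/p(\sigma_{<h})$ be the corresponding conditional. For $1\le h\le N-1$ I would set
\begin{equation}
 \psi_{\sigma_{\le h},a}:=\delta_{a,1}\,\sqrt{p(\sigma_h\,|\,\sigma_{<h})},
\end{equation}
so that the $h$-th factor is the rank-one positive matrix $M^{(h)}_{\sigma_{\le h},\eta_{\le h}}:=\sum_a\psi_{\sigma_{\le h},a}[\psi_{\eta_{\le h},a}]^*=\sqrt{p(\sigma_h|\sigma_{<h})}\,\sqrt{p(\eta_h|\eta_{<h})}$; for $h=N$ I would take \emph{any} Gram decomposition $M^{(N)}=\sum_{a=1}^{2^N}\psi_{\sigma,a}[\psi_{\eta,a}]^*$ (padding with zero columns if the rank is smaller than $2^N$) of the $2^N\times 2^N$ matrix
\begin{equation}
 M^{(N)}_{\sigma,\eta}:=\frac{\langle\sigma|\hat\rho|\eta\rangle}{\sqrt{p(\sigma_{<N})\,p(\eta_{<N})}}.
\end{equation}

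Then there are four checks. (i) \emph{Positivity and rank of $M^{(N)}$}: since $M^{(N)}=\hat\rho\odot gg^\dagger$ with $g_\sigma=p(\sigma_{<N})^{-1/2}$ (equivalently $M^{(N)}=D\hat\rho D$ with $D$ diagonal positive), Shur's product theorem gives $M^{(N)}\succeq 0$; being a $2^N\times 2^N$ matrix it has rank at most $2^N$, so a Gram decomposition with exactly $R=2^N$ vectors exists. (ii) \emph{The normalization constraint \cref{eq:constraint-aghdo}}: for $h<N$ it is immediate, $\sum_{\sigma_h}\sum_a|\psi_{\sigma_{\le h},a}|^2=\sum_{\sigma_h}p(\sigma_h|\sigma_{<h})=1$; for $h=N$ it holds because $\sum_{\sigma_N}M^{(N)}_{\sigma\sigma}=p(\sigma_{<N})^{-1}\sum_{\sigma_N}p(\sigma)=p(\sigma_{<N})^{-1}p(\sigma_{<N})=1$. (iii) \emph{The product reproduces $\hat\rho$}: by the chain rule the first $N-1$ factors multiply to $\prod_{h=1}^{N-1}\sqrt{p(\sigma_h|\sigma_{<h})}\,\prod_{h=1}^{N-1}\sqrt{p(\eta_h|\eta_{<h})}=\sqrt{p(\sigma_{<N})\,p(\eta_{<N})}$, which cancels the denominator of $M^{(N)}$, leaving $\prod_{h=1}^N M^{(h)}_{\sigma_{\le h},\eta_{\le h}}=\langle\sigma|\hat\rho|\eta\rangle$. (iv) \emph{Hermiticity} is automatic for any AGHDO. (For $N=1$ the range $h<N$ is empty and $M^{(1)}=\hat\rho$, recovering the trivial statement that a $2\times 2$ density matrix has a Gram decomposition with $R=2$.)

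The only real wrinkle is the set of zero-probability prefixes. If $p(\sigma_{<h})=0$, then $\sum_{\sigma_{\ge h}}\langle(\sigma_{<h},\sigma_{\ge h})|\hat\rho|(\sigma_{<h},\sigma_{\ge h})\rangle=0$, so positivity of $\hat\rho$ forces every matrix element $\langle\sigma|\hat\rho|\eta\rangle$ with $\sigma$ or $\eta$ extending that prefix to vanish; one may then assign an arbitrary normalized conditional (say uniform) and an arbitrary normalized Gram vector on the affected branch, which satisfies \cref{eq:constraint-aghdo} while leaving the product of factors unchanged, since it is multiplied by the vanishing factor coming from the relevant $M^{(h)}$. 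This bookkeeping is routine; the substantive content is the peeling construction together with the one invocation of Shur's product theorem, so I do not anticipate a genuine obstacle.
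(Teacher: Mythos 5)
Your construction is essentially identical to the paper's: both peel off the first $N-1$ sites as rank-one factors carrying $\sqrt{p(\sigma_h|\sigma_{<h})}$ and place the entire coherence structure in the last factor, which is the diagonally rescaled matrix $D\hat\rho D$ (the paper's ``coherence matrix'' $\rho_{\mathrm c}$ differs from your $M^{(N)}$ only by where the factor $\sqrt{p(\sigma_N|\sigma_{<N})}$ is absorbed), with positive semi-definiteness established by the same congruence argument and the Gram decomposition supplying the $R=2^N$ vectors. The proof is correct, and your explicit treatment of zero-probability prefixes matches the paper's convention of setting the rescaled matrix to zero there.
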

\begin{proof}
Let $\hat{\rho}$ be a density operator of a system of $n$ spins with $\text{Tr}\;\hat{\rho}=1$. From $p(\sigma):=\langle\sigma|\hat{\rho}|\sigma\rangle$, we can define the conditional probabilities $p(\sigma_h|\sigma_{< h})$. We introduce the coherence matrix $\hat{\rho}_{\text{c}}$ in the following way: for all $\sigma,\eta$ such that $p(\sigma)\neq 0$, $p(\eta)\neq 0$, we define
\begin{equation}
    \rho_{\text{c}}(\sigma,\eta) := \frac{\langle\sigma|\hat{\rho}|\eta\rangle}{\sqrt{p(\sigma)\,p(\eta)}},
\end{equation}
while, for $p(\sigma)\,p(\eta)=0$, we define $\rho_{\text{c}}(\sigma,\eta)=0$.
One has $|\rho_{\text{c}}(\sigma,\eta)|\le 1$, and $\rho_{\text{c}}(\sigma,\eta)=0$ if and only if $\rho(\sigma,\eta)=0$. The operator $\hat{\rho}_{\text{c}}$ is hermitian positive semi-definite, as for every $v=(v_1,\dots,v_{2^N})$, $v_\sigma\in\mathbb{C}$,
\begin{equation}
    \langle v|\hat{\rho}_{\text{c}}|v\rangle=\langle v'|\hat{\rho}|v'\rangle\ge 0,
\end{equation}
where we have defined $v'_\sigma=\frac{v_\sigma}{\sqrt{p(\sigma)}}$ if $p(\sigma)\neq 0$, and $v'_\sigma=0$ for $p(\sigma)=0$.

We can therefore write
\begin{equation}
    \rho_{\text{c}}(\sigma,\eta)=\sum_{a=1}^{2^N}\Psi(\sigma,a)\,[\Psi(\eta,a)]^*
\end{equation}
for some $\Psi(\sigma,a)\in\mathbb{C}$ with $\sum_{a=1}^{2^N}|\Psi(\sigma,a)|^2=1$ for all $\sigma$.
We define for $1
\le h<N$
\begin{equation}
    \psi_{\sigma_{\le h},a}:=\delta_{a,1}\;\sqrt{p(\sigma_h|\sigma_{<h})},
\end{equation}
and
\begin{equation}
    \psi_{\sigma_{\le N},a}:=\sqrt{p(\sigma_N|\sigma_{<N})}\,\Psi(\sigma,a).
\end{equation}
We also have that $\sum_{a}|\psi_{\sigma_{\le h},a}|^2=p(\sigma_{h}|\sigma_{<h})$. Using the fact that $\langle\sigma|\hat{\rho}|\eta\rangle = \sqrt{p(\sigma)\,p(\eta)}\rho_{\text{c}}(\sigma,\eta)$, one has
\begin{equation}
    \langle \sigma|\hat{\rho}|\eta\rangle =\left(\prod_{h=1}^{N-1}\psi_{\sigma_{\le h},1}\,[\psi_{\eta_{\le h},1}]^*\right) \sum_{a=1}^{2^N}\psi_{\sigma_{\le n},a}\,[\psi_{\eta_{\le n},a}]^*
\end{equation}
which is in the form of \cref{eq:definition-aghdo} with $R=2^N$.
\end{proof}

\section{Convex combination of conditional probability distributions}\label{sec:appendix-hybridization}

\begin{definition}[Convex combination of conditionals of probability distributions]
Let $p$ and $q$ be two probability distributions over a set of $N$ spins. For $0\le\alpha\le 1$, we define the convex combination of conditionals of these two probability distributions by
\begin{equation}
    \mathcal{C}_\alpha(p,q)(\sigma) := \prod_{h=1}^{N}\left(\alpha\, p(\sigma_h|\sigma_{<h})+(1-\alpha)\,q(\sigma_h|\sigma_{<h})\right)
\end{equation}
\end{definition}
\begin{proposition}
$\mathcal{C}_\alpha(p,q)$ is a probability distribution, and it can be directly sampled if $p$ and $q$ are in the autoregressive form.
\begin{proof}
$\mathcal{C}_\alpha(p,q)$ is a convex combination of $2^N$ normalized probability distributions:
\begin{equation}\label{eq:convex-combination-autoregressive}
\begin{split}
&    \mathcal{C}_\alpha(p,q)=\\
&\sum_{s_1,\dots,s_n\in\{0,1\}}\prod_{h=1}^N [\alpha\,p(\sigma_h|\sigma_{<h})]^{s_h}\,[(1-\alpha)q(\sigma_h|\sigma_{<h})]^{1-s_h},
    \end{split}
\end{equation}
therefore it is normalized. In order to directly sample from $\mathcal{C}_{\alpha}(p,q)$, we can first sample the coefficients
of the convex combination in \cref{eq:convex-combination-autoregressive} by picking $s_1,\dots,s_N$ independently with probability $p(s_h=1)=\alpha$,  and we can then sample $\sigma$ by autoregressive sampling using $p(\sigma_h|\sigma_{<h})$ if $s_h=1$ and $q(\sigma_h|\sigma_{<h})$ otherwise. 
\end{proof}
\end{proposition}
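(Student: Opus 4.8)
The plan is to dispatch both claims — that $\mathcal{C}_\alpha(p,q)$ is normalized and that it can be sampled directly — from a single structural observation: distributing the $N$-fold product over the binary choice ``$\alpha\,p$ versus $(1-\alpha)\,q$'' at each site rewrites $\mathcal{C}_\alpha(p,q)$ as an explicit convex combination of $2^N$ autoregressive distributions. Concretely, by the elementary identity $\prod_h(a_h+b_h)=\sum_{s\in\{0,1\}^N}\prod_h a_h^{s_h}b_h^{1-s_h}$,
\begin{equation}
  \mathcal{C}_\alpha(p,q)(\sigma)=\sum_{s\in\{0,1\}^N} w_s\, r_s(\sigma),
\end{equation}
where $w_s:=\prod_{h=1}^N\alpha^{s_h}(1-\alpha)^{1-s_h}$ and $r_s(\sigma):=\prod_{h=1}^N p(\sigma_h|\sigma_{<h})^{s_h}\,q(\sigma_h|\sigma_{<h})^{1-s_h}$.

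First I would establish that $\mathcal{C}_\alpha(p,q)$ is a probability distribution. Nonnegativity is immediate since every factor $\alpha\,p(\sigma_h|\sigma_{<h})+(1-\alpha)\,q(\sigma_h|\sigma_{<h})$ is a nonnegative combination of nonnegative conditionals. For normalization I would note that each $r_s$ is itself a valid joint distribution over the $N$ spins: it is a product of conditionals, each of which is either a $p$- or a $q$-conditional, so $\sum_{\sigma_h} r_s(\sigma_h|\sigma_{<h})=1$ for every prefix $\sigma_{<h}$; summing the configuration out from $h=N$ inward telescopes to $\sum_\sigma r_s(\sigma)=1$. Since also $\sum_s w_s=(\alpha+(1-\alpha))^N=1$, it follows that $\sum_\sigma \mathcal{C}_\alpha(p,q)(\sigma)=1$.

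Next I would exhibit the sampling procedure and verify it is exact. The algorithm is two-stage: (i) draw the bits $s_1,\dots,s_N$ as independent $\mathrm{Bernoulli}(\alpha)$ variables — this samples $s\sim w$; (ii) given $s$, generate $\sigma$ autoregressively, drawing $\sigma_h\sim p(\cdot|\sigma_{<h})$ when $s_h=1$ and $\sigma_h\sim q(\cdot|\sigma_{<h})$ when $s_h=0$. Both stages are feasible precisely because $p$ and $q$ are supplied in autoregressive form, so their conditionals can be evaluated and sampled for an arbitrary prefix. To close the argument I would marginalize: the joint law of $(s,\sigma)$ output by the algorithm is $\Pr[s,\sigma]=w_s\,r_s(\sigma)=\prod_{h=1}^N\big(\alpha\,p(\sigma_h|\sigma_{<h})\big)^{s_h}\big((1-\alpha)\,q(\sigma_h|\sigma_{<h})\big)^{1-s_h}$, and summing over $s\in\{0,1\}^N$ factorizes site by site, returning $\prod_{h=1}^N\big(\alpha\,p(\sigma_h|\sigma_{<h})+(1-\alpha)\,q(\sigma_h|\sigma_{<h})\big)=\mathcal{C}_\alpha(p,q)(\sigma)$, as required.

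The one spot deserving care — the part I would flag as the main (mild) subtlety rather than a genuine obstacle — is stage (ii): when site $h$ is decided with the $p$- or the $q$-conditional, the prefix $\sigma_{<h}$ on which we condition was itself built from a \emph{mixture} of $p$- and $q$-steps, not from a homogeneous chain, so one might worry the decomposition above is spoiled. It is not: the weight $w_s$ is a plain product of independent Bernoulli factors with no $\sigma$-dependence, so the chain rule applies to each $r_s$ separately and the marginalization over $s$ goes through verbatim; everything else is routine bookkeeping. (In the application to $p_\alpha(\sigma,\eta)$ of \cref{sec:tdvp} one takes $q(\eta_h|\eta_{<h})=\delta_{\sigma_h,\eta_h}$, a point-mass conditional trivially in autoregressive form, and samples $\sigma$ first; the proposition then yields direct samplability of $p_\alpha$.)
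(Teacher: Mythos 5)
Your proof is correct and follows essentially the same route as the paper: the binomial-style expansion of the product into a convex combination of $2^N$ autoregressive distributions indexed by $s\in\{0,1\}^N$, followed by the two-stage sampling (Bernoulli bits, then autoregressive generation). You merely spell out the normalization telescoping and the final marginalization over $s$ in more detail than the paper does.
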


\section{Implementation details of numerical experiments} \label{append:numerical}

The architectures proposed in the article where implemented in \textsc{Jax} \cite{frostig2018jax,jax2018github} using \textsc{Flax} \cite{flax2020github}, calculations where performed using \textsc{NetKet} \cite{netket3,netket2,Vicentini2021Nat} and distributed across different nodes with \textsc{mpi4jax} \cite{Vicentini2011mpi4jax}.

The Autoregressive GHDO of \cref{eq:def:gram-hadamard-ndo} has been realized by implementing $\psi_{\sigma_{\le h}, a}$ by alternating Masked-Dense layers with variable feature density 
and \textrm{SELU} nonlinearities~\cite{Kaluber2017SELU}. 
Dense layers were taken with complex parameters initialized according to a truncated normal distribution with width between $10^{-3}$ and $10^{-1}$, while selu acts independently on the real and imaginary part of its inputs.

The linear system of the TDVP Equation has been solved using conjugate gradients method and a large regularisation between $[10^{-4},10^{-2}]$.

\bibliographystyle{apsrev4-1}
\bibliography{biblio}

\begin{thebibliography}{63}%
\makeatletter
\providecommand \@ifxundefined [1]{%
 \@ifx{#1\undefined}
}%
\providecommand \@ifnum [1]{%
 \ifnum #1\expandafter \@firstoftwo
 \else \expandafter \@secondoftwo
 \fi
}%
\providecommand \@ifx [1]{%
 \ifx #1\expandafter \@firstoftwo
 \else \expandafter \@secondoftwo
 \fi
}%
\providecommand \natexlab [1]{#1}%
\providecommand \enquote  [1]{``#1''}%
\providecommand \bibnamefont  [1]{#1}%
\providecommand \bibfnamefont [1]{#1}%
\providecommand \citenamefont [1]{#1}%
\providecommand \href@noop [0]{\@secondoftwo}%
\providecommand \href [0]{\begingroup \@sanitize@url \@href}%
\providecommand \@href[1]{\@@startlink{#1}\@@href}%
\providecommand \@@href[1]{\endgroup#1\@@endlink}%
\providecommand \@sanitize@url [0]{\catcode `\\12\catcode `\$12\catcode
  `\&12\catcode `\#12\catcode `\^12\catcode `\_12\catcode `\%12\relax}%
\providecommand \@@startlink[1]{}%
\providecommand \@@endlink[0]{}%
\providecommand \url  [0]{\begingroup\@sanitize@url \@url }%
\providecommand \@url [1]{\endgroup\@href {#1}{\urlprefix }}%
\providecommand \urlprefix  [0]{URL }%
\providecommand \Eprint [0]{\href }%
\providecommand \doibase [0]{http://dx.doi.org/}%
\providecommand \selectlanguage [0]{\@gobble}%
\providecommand \bibinfo  [0]{\@secondoftwo}%
\providecommand \bibfield  [0]{\@secondoftwo}%
\providecommand \translation [1]{[#1]}%
\providecommand \BibitemOpen [0]{}%
\providecommand \bibitemStop [0]{}%
\providecommand \bibitemNoStop [0]{.\EOS\space}%
\providecommand \EOS [0]{\spacefactor3000\relax}%
\providecommand \BibitemShut  [1]{\csname bibitem#1\endcsname}%
\let\auto@bib@innerbib\@empty
\bibitem [{\citenamefont {Carleo}\ and\ \citenamefont
  {Troyer}(2017)}]{carleo2017Science}%
  \BibitemOpen
  \bibfield  {author} {\bibinfo {author} {\bibfnamefont {G.}~\bibnamefont
  {Carleo}}\ and\ \bibinfo {author} {\bibfnamefont {M.}~\bibnamefont
  {Troyer}},\ }\href {\doibase 10.1126/science.aag2302} {\bibfield  {journal}
  {\bibinfo  {journal} {Science}\ }\textbf {\bibinfo {volume} {355}},\ \bibinfo
  {pages} {602} (\bibinfo {year} {2017})}\BibitemShut {NoStop}%
\bibitem [{\citenamefont {Cybenko}(1989)}]{Cybenko1989Representation}%
  \BibitemOpen
  \bibfield  {author} {\bibinfo {author} {\bibfnamefont {G.}~\bibnamefont
  {Cybenko}},\ }\href {\doibase 10.1007/bf02551274} {\bibfield  {journal}
  {\bibinfo  {journal} {Mathematics of Control, Signals, and Systems}\ }\textbf
  {\bibinfo {volume} {2}},\ \bibinfo {pages} {303} (\bibinfo {year}
  {1989})}\BibitemShut {NoStop}%
\bibitem [{\citenamefont {Leshno}\ \emph {et~al.}(1993)\citenamefont {Leshno},
  \citenamefont {Lin}, \citenamefont {Pinkus},\ and\ \citenamefont
  {Schocken}}]{Leshno1993Repr}%
  \BibitemOpen
  \bibfield  {author} {\bibinfo {author} {\bibfnamefont {M.}~\bibnamefont
  {Leshno}}, \bibinfo {author} {\bibfnamefont {V.~Y.}\ \bibnamefont {Lin}},
  \bibinfo {author} {\bibfnamefont {A.}~\bibnamefont {Pinkus}}, \ and\ \bibinfo
  {author} {\bibfnamefont {S.}~\bibnamefont {Schocken}},\ }\href {\doibase
  10.1016/s0893-6080(05)80131-5} {\bibfield  {journal} {\bibinfo  {journal}
  {Neural Networks}\ }\textbf {\bibinfo {volume} {6}},\ \bibinfo {pages} {861}
  (\bibinfo {year} {1993})}\BibitemShut {NoStop}%
\bibitem [{\citenamefont {den Nest}(2010)}]{best2010simulating}%
  \BibitemOpen
  \bibfield  {author} {\bibinfo {author} {\bibfnamefont {M.~V.}\ \bibnamefont
  {den Nest}},\ }\href@noop {} {\enquote {\bibinfo {title} {Simulating quantum
  computers with probabilistic methods},}\ } (\bibinfo {year} {2010}),\ \Eprint
  {http://arxiv.org/abs/0911.1624} {arXiv:0911.1624 [quant-ph]} \BibitemShut
  {NoStop}%
\bibitem [{\citenamefont {Glasser}\ \emph {et~al.}(2018)\citenamefont
  {Glasser}, \citenamefont {Pancotti}, \citenamefont {August}, \citenamefont
  {Rodriguez},\ and\ \citenamefont {Cirac}}]{Glasser2018PRX}%
  \BibitemOpen
  \bibfield  {author} {\bibinfo {author} {\bibfnamefont {I.}~\bibnamefont
  {Glasser}}, \bibinfo {author} {\bibfnamefont {N.}~\bibnamefont {Pancotti}},
  \bibinfo {author} {\bibfnamefont {M.}~\bibnamefont {August}}, \bibinfo
  {author} {\bibfnamefont {I.~D.}\ \bibnamefont {Rodriguez}}, \ and\ \bibinfo
  {author} {\bibfnamefont {J.~I.}\ \bibnamefont {Cirac}},\ }\href {\doibase
  10.1103/physrevx.8.011006} {\bibfield  {journal} {\bibinfo  {journal}
  {Physical Review X}\ }\textbf {\bibinfo {volume} {8}} (\bibinfo {year}
  {2018}),\ 10.1103/physrevx.8.011006}\BibitemShut {NoStop}%
\bibitem [{\citenamefont {Lu}\ \emph {et~al.}(2019)\citenamefont {Lu},
  \citenamefont {Gao},\ and\ \citenamefont {Duan}}]{Lu19PRBTopologicalNQS}%
  \BibitemOpen
  \bibfield  {author} {\bibinfo {author} {\bibfnamefont {S.}~\bibnamefont
  {Lu}}, \bibinfo {author} {\bibfnamefont {X.}~\bibnamefont {Gao}}, \ and\
  \bibinfo {author} {\bibfnamefont {L.-M.}\ \bibnamefont {Duan}},\ }\href
  {\doibase 10.1103/PhysRevB.99.155136} {\bibfield  {journal} {\bibinfo
  {journal} {Phys. Rev. B}\ }\textbf {\bibinfo {volume} {99}},\ \bibinfo
  {pages} {155136} (\bibinfo {year} {2019})}\BibitemShut {NoStop}%
\bibitem [{\citenamefont {Nomura}(2021)}]{Nomura2021Symm}%
  \BibitemOpen
  \bibfield  {author} {\bibinfo {author} {\bibfnamefont {Y.}~\bibnamefont
  {Nomura}},\ }\href {\doibase 10.1088/1361-648x/abe268} {\bibfield  {journal}
  {\bibinfo  {journal} {Journal of Physics: Condensed Matter}\ }\textbf
  {\bibinfo {volume} {33}},\ \bibinfo {pages} {174003} (\bibinfo {year}
  {2021})}\BibitemShut {NoStop}%
\bibitem [{\citenamefont {Saito}\ and\ \citenamefont
  {Kato}(2018)}]{Saito2018Bosonic}%
  \BibitemOpen
  \bibfield  {author} {\bibinfo {author} {\bibfnamefont {H.}~\bibnamefont
  {Saito}}\ and\ \bibinfo {author} {\bibfnamefont {M.}~\bibnamefont {Kato}},\
  }\href {\doibase 10.7566/jpsj.87.014001} {\bibfield  {journal} {\bibinfo
  {journal} {Journal of the Physical Society of Japan}\ }\textbf {\bibinfo
  {volume} {87}},\ \bibinfo {pages} {014001} (\bibinfo {year}
  {2018})}\BibitemShut {NoStop}%
\bibitem [{\citenamefont {Sharir}\ \emph {et~al.}(2020)\citenamefont {Sharir},
  \citenamefont {Levine}, \citenamefont {Wies}, \citenamefont {Carleo},\ and\
  \citenamefont {Shashua}}]{sharir2020deep}%
  \BibitemOpen
  \bibfield  {author} {\bibinfo {author} {\bibfnamefont {O.}~\bibnamefont
  {Sharir}}, \bibinfo {author} {\bibfnamefont {Y.}~\bibnamefont {Levine}},
  \bibinfo {author} {\bibfnamefont {N.}~\bibnamefont {Wies}}, \bibinfo {author}
  {\bibfnamefont {G.}~\bibnamefont {Carleo}}, \ and\ \bibinfo {author}
  {\bibfnamefont {A.}~\bibnamefont {Shashua}},\ }\href {\doibase
  10.1103/PhysRevLett.124.020503} {\bibfield  {journal} {\bibinfo  {journal}
  {Phys. Rev. Lett.}\ }\textbf {\bibinfo {volume} {124}},\ \bibinfo {pages}
  {020503} (\bibinfo {year} {2020})}\BibitemShut {NoStop}%
\bibitem [{\citenamefont {Wu}\ \emph {et~al.}(2019)\citenamefont {Wu},
  \citenamefont {Wang},\ and\ \citenamefont
  {Zhang}}]{Wu2019PRL-AutoregClassical}%
  \BibitemOpen
  \bibfield  {author} {\bibinfo {author} {\bibfnamefont {D.}~\bibnamefont
  {Wu}}, \bibinfo {author} {\bibfnamefont {L.}~\bibnamefont {Wang}}, \ and\
  \bibinfo {author} {\bibfnamefont {P.}~\bibnamefont {Zhang}},\ }\href
  {\doibase 10.1103/PhysRevLett.122.080602} {\bibfield  {journal} {\bibinfo
  {journal} {Phys. Rev. Lett.}\ }\textbf {\bibinfo {volume} {122}},\ \bibinfo
  {pages} {080602} (\bibinfo {year} {2019})}\BibitemShut {NoStop}%
\bibitem [{\citenamefont {Hibat-Allah}\ \emph {et~al.}(2020)\citenamefont
  {Hibat-Allah}, \citenamefont {Ganahl}, \citenamefont {Hayward}, \citenamefont
  {Melko},\ and\ \citenamefont {Carrasquilla}}]{Allah2020PRR-RNN}%
  \BibitemOpen
  \bibfield  {author} {\bibinfo {author} {\bibfnamefont {M.}~\bibnamefont
  {Hibat-Allah}}, \bibinfo {author} {\bibfnamefont {M.}~\bibnamefont {Ganahl}},
  \bibinfo {author} {\bibfnamefont {L.~E.}\ \bibnamefont {Hayward}}, \bibinfo
  {author} {\bibfnamefont {R.~G.}\ \bibnamefont {Melko}}, \ and\ \bibinfo
  {author} {\bibfnamefont {J.}~\bibnamefont {Carrasquilla}},\ }\href {\doibase
  10.1103/PhysRevResearch.2.023358} {\bibfield  {journal} {\bibinfo  {journal}
  {Phys. Rev. Research}\ }\textbf {\bibinfo {volume} {2}},\ \bibinfo {pages}
  {023358} (\bibinfo {year} {2020})}\BibitemShut {NoStop}%
\bibitem [{Note1()}]{Note1}%
  \BibitemOpen
  \bibinfo {note} {Direct sampling is more efficient than Metropolis, as it
  avoids autocorrelation problems in the sampled chain. This is particularly
  important for Variational Monte Carlo techniques, which cannot rely on a
  large number of samples at each optimization step.}\BibitemShut {Stop}%
\bibitem [{\citenamefont {Choo}\ \emph {et~al.}(2018)\citenamefont {Choo},
  \citenamefont {Carleo}, \citenamefont {Regnault},\ and\ \citenamefont
  {Neupert}}]{choo2018symmetries}%
  \BibitemOpen
  \bibfield  {author} {\bibinfo {author} {\bibfnamefont {K.}~\bibnamefont
  {Choo}}, \bibinfo {author} {\bibfnamefont {G.}~\bibnamefont {Carleo}},
  \bibinfo {author} {\bibfnamefont {N.}~\bibnamefont {Regnault}}, \ and\
  \bibinfo {author} {\bibfnamefont {T.}~\bibnamefont {Neupert}},\ }\href@noop
  {} {\bibfield  {journal} {\bibinfo  {journal} {Physical review letters}\
  }\textbf {\bibinfo {volume} {121}},\ \bibinfo {pages} {167204} (\bibinfo
  {year} {2018})}\BibitemShut {NoStop}%
\bibitem [{\citenamefont {Roth}\ and\ \citenamefont
  {MacDonald}(2021)}]{roth2021group}%
  \BibitemOpen
  \bibfield  {author} {\bibinfo {author} {\bibfnamefont {C.}~\bibnamefont
  {Roth}}\ and\ \bibinfo {author} {\bibfnamefont {A.~H.}\ \bibnamefont
  {MacDonald}},\ }\href@noop {} {\bibfield  {journal} {\bibinfo  {journal}
  {arXiv preprint arXiv:2104.05085}\ } (\bibinfo {year} {2021})}\BibitemShut
  {NoStop}%
\bibitem [{\citenamefont {Vieijra}\ \emph {et~al.}(2020)\citenamefont
  {Vieijra}, \citenamefont {Casert}, \citenamefont {Nys}, \citenamefont
  {De~Neve}, \citenamefont {Haegeman}, \citenamefont {Ryckebusch},\ and\
  \citenamefont {Verstraete}}]{Vieijra2020PRL-NonAbelian}%
  \BibitemOpen
  \bibfield  {author} {\bibinfo {author} {\bibfnamefont {T.}~\bibnamefont
  {Vieijra}}, \bibinfo {author} {\bibfnamefont {C.}~\bibnamefont {Casert}},
  \bibinfo {author} {\bibfnamefont {J.}~\bibnamefont {Nys}}, \bibinfo {author}
  {\bibfnamefont {W.}~\bibnamefont {De~Neve}}, \bibinfo {author} {\bibfnamefont
  {J.}~\bibnamefont {Haegeman}}, \bibinfo {author} {\bibfnamefont
  {J.}~\bibnamefont {Ryckebusch}}, \ and\ \bibinfo {author} {\bibfnamefont
  {F.}~\bibnamefont {Verstraete}},\ }\href {\doibase
  10.1103/PhysRevLett.124.097201} {\bibfield  {journal} {\bibinfo  {journal}
  {Phys. Rev. Lett.}\ }\textbf {\bibinfo {volume} {124}},\ \bibinfo {pages}
  {097201} (\bibinfo {year} {2020})}\BibitemShut {NoStop}%
\bibitem [{\citenamefont {Vieijra}\ and\ \citenamefont
  {Nys}(2021)}]{Vieijra2021PRB}%
  \BibitemOpen
  \bibfield  {author} {\bibinfo {author} {\bibfnamefont {T.}~\bibnamefont
  {Vieijra}}\ and\ \bibinfo {author} {\bibfnamefont {J.}~\bibnamefont {Nys}},\
  }\href {\doibase 10.1103/PhysRevB.104.045123} {\bibfield  {journal} {\bibinfo
   {journal} {Phys. Rev. B}\ }\textbf {\bibinfo {volume} {104}},\ \bibinfo
  {pages} {045123} (\bibinfo {year} {2021})}\BibitemShut {NoStop}%
\bibitem [{\citenamefont {Choo}\ \emph {et~al.}(2020)\citenamefont {Choo},
  \citenamefont {Mezzacapo},\ and\ \citenamefont {Carleo}}]{Choo2020Fermions}%
  \BibitemOpen
  \bibfield  {author} {\bibinfo {author} {\bibfnamefont {K.}~\bibnamefont
  {Choo}}, \bibinfo {author} {\bibfnamefont {A.}~\bibnamefont {Mezzacapo}}, \
  and\ \bibinfo {author} {\bibfnamefont {G.}~\bibnamefont {Carleo}},\ }\href
  {\doibase 10.1038/s41467-020-15724-9} {\bibfield  {journal} {\bibinfo
  {journal} {Nature Communications}\ }\textbf {\bibinfo {volume} {11}}
  (\bibinfo {year} {2020}),\ 10.1038/s41467-020-15724-9}\BibitemShut {NoStop}%
\bibitem [{\citenamefont {Nomura}\ \emph {et~al.}(2017)\citenamefont {Nomura},
  \citenamefont {Darmawan}, \citenamefont {Yamaji},\ and\ \citenamefont
  {Imada}}]{NomuraPRBFermionic}%
  \BibitemOpen
  \bibfield  {author} {\bibinfo {author} {\bibfnamefont {Y.}~\bibnamefont
  {Nomura}}, \bibinfo {author} {\bibfnamefont {A.~S.}\ \bibnamefont
  {Darmawan}}, \bibinfo {author} {\bibfnamefont {Y.}~\bibnamefont {Yamaji}}, \
  and\ \bibinfo {author} {\bibfnamefont {M.}~\bibnamefont {Imada}},\ }\href
  {\doibase 10.1103/PhysRevB.96.205152} {\bibfield  {journal} {\bibinfo
  {journal} {Phys. Rev. B}\ }\textbf {\bibinfo {volume} {96}},\ \bibinfo
  {pages} {205152} (\bibinfo {year} {2017})}\BibitemShut {NoStop}%
\bibitem [{\citenamefont {Stokes}\ \emph {et~al.}(2020)\citenamefont {Stokes},
  \citenamefont {Moreno}, \citenamefont {Pnevmatikakis},\ and\ \citenamefont
  {Carleo}}]{Stokes2020PRB}%
  \BibitemOpen
  \bibfield  {author} {\bibinfo {author} {\bibfnamefont {J.}~\bibnamefont
  {Stokes}}, \bibinfo {author} {\bibfnamefont {J.~R.}\ \bibnamefont {Moreno}},
  \bibinfo {author} {\bibfnamefont {E.~A.}\ \bibnamefont {Pnevmatikakis}}, \
  and\ \bibinfo {author} {\bibfnamefont {G.}~\bibnamefont {Carleo}},\ }\href
  {\doibase 10.1103/PhysRevB.102.205122} {\bibfield  {journal} {\bibinfo
  {journal} {Phys. Rev. B}\ }\textbf {\bibinfo {volume} {102}},\ \bibinfo
  {pages} {205122} (\bibinfo {year} {2020})}\BibitemShut {NoStop}%
\bibitem [{\citenamefont {Pfau}\ \emph {et~al.}(2020)\citenamefont {Pfau},
  \citenamefont {Spencer}, \citenamefont {Matthews},\ and\ \citenamefont
  {Foulkes}}]{Pfau2020PRR}%
  \BibitemOpen
  \bibfield  {author} {\bibinfo {author} {\bibfnamefont {D.}~\bibnamefont
  {Pfau}}, \bibinfo {author} {\bibfnamefont {J.~S.}\ \bibnamefont {Spencer}},
  \bibinfo {author} {\bibfnamefont {A.~G. D.~G.}\ \bibnamefont {Matthews}}, \
  and\ \bibinfo {author} {\bibfnamefont {W.~M.~C.}\ \bibnamefont {Foulkes}},\
  }\href {\doibase 10.1103/PhysRevResearch.2.033429} {\bibfield  {journal}
  {\bibinfo  {journal} {Phys. Rev. Research}\ }\textbf {\bibinfo {volume}
  {2}},\ \bibinfo {pages} {033429} (\bibinfo {year} {2020})}\BibitemShut
  {NoStop}%
\bibitem [{\citenamefont {Spencer}\ \emph {et~al.}(2020)\citenamefont
  {Spencer}, \citenamefont {Pfau}, \citenamefont {Botev},\ and\ \citenamefont
  {Foulkes}}]{Spencer2020ArXiv}%
  \BibitemOpen
  \bibfield  {author} {\bibinfo {author} {\bibfnamefont {J.~S.}\ \bibnamefont
  {Spencer}}, \bibinfo {author} {\bibfnamefont {D.}~\bibnamefont {Pfau}},
  \bibinfo {author} {\bibfnamefont {A.}~\bibnamefont {Botev}}, \ and\ \bibinfo
  {author} {\bibfnamefont {W.~M.~C.}\ \bibnamefont {Foulkes}},\ }\href
  {\doibase 10.48550/ARXIV.2011.07125} {\enquote {\bibinfo {title} {Better,
  faster fermionic neural networks},}\ } (\bibinfo {year} {2020})\BibitemShut
  {NoStop}%
\bibitem [{\citenamefont {Hermann}\ \emph {et~al.}(2020)\citenamefont
  {Hermann}, \citenamefont {Sch\"{a}tzle},\ and\ \citenamefont
  {No{\'{e}}}}]{Hermann2020NatureChem}%
  \BibitemOpen
  \bibfield  {author} {\bibinfo {author} {\bibfnamefont {J.}~\bibnamefont
  {Hermann}}, \bibinfo {author} {\bibfnamefont {Z.}~\bibnamefont
  {Sch\"{a}tzle}}, \ and\ \bibinfo {author} {\bibfnamefont {F.}~\bibnamefont
  {No{\'{e}}}},\ }\href {\doibase 10.1038/s41557-020-0544-y} {\bibfield
  {journal} {\bibinfo  {journal} {Nature Chemistry}\ }\textbf {\bibinfo
  {volume} {12}},\ \bibinfo {pages} {891} (\bibinfo {year} {2020})}\BibitemShut
  {NoStop}%
\bibitem [{\citenamefont {Yoshioka}\ \emph {et~al.}(2021)\citenamefont
  {Yoshioka}, \citenamefont {Mizukami},\ and\ \citenamefont
  {Nori}}]{Yoshioka2021Fermionic}%
  \BibitemOpen
  \bibfield  {author} {\bibinfo {author} {\bibfnamefont {N.}~\bibnamefont
  {Yoshioka}}, \bibinfo {author} {\bibfnamefont {W.}~\bibnamefont {Mizukami}},
  \ and\ \bibinfo {author} {\bibfnamefont {F.}~\bibnamefont {Nori}},\ }\href
  {\doibase 10.1038/s42005-021-00609-0} {\bibfield  {journal} {\bibinfo
  {journal} {Communications Physics}\ }\textbf {\bibinfo {volume} {4}}
  (\bibinfo {year} {2021}),\ 10.1038/s42005-021-00609-0}\BibitemShut {NoStop}%
\bibitem [{\citenamefont {Inui}\ \emph {et~al.}(2021)\citenamefont {Inui},
  \citenamefont {Kato},\ and\ \citenamefont {Motome}}]{Inui2021PRR}%
  \BibitemOpen
  \bibfield  {author} {\bibinfo {author} {\bibfnamefont {K.}~\bibnamefont
  {Inui}}, \bibinfo {author} {\bibfnamefont {Y.}~\bibnamefont {Kato}}, \ and\
  \bibinfo {author} {\bibfnamefont {Y.}~\bibnamefont {Motome}},\ }\href
  {\doibase 10.1103/physrevresearch.3.043126} {\bibfield  {journal} {\bibinfo
  {journal} {Physical Review Research}\ }\textbf {\bibinfo {volume} {3}}
  (\bibinfo {year} {2021}),\ 10.1103/physrevresearch.3.043126}\BibitemShut
  {NoStop}%
\bibitem [{\citenamefont {Luo}\ and\ \citenamefont
  {Clark}(2019)}]{Diluo2022PRL}%
  \BibitemOpen
  \bibfield  {author} {\bibinfo {author} {\bibfnamefont {D.}~\bibnamefont
  {Luo}}\ and\ \bibinfo {author} {\bibfnamefont {B.~K.}\ \bibnamefont
  {Clark}},\ }\href {\doibase 10.1103/PhysRevLett.122.226401} {\bibfield
  {journal} {\bibinfo  {journal} {Phys. Rev. Lett.}\ }\textbf {\bibinfo
  {volume} {122}},\ \bibinfo {pages} {226401} (\bibinfo {year}
  {2019})}\BibitemShut {NoStop}%
\bibitem [{\citenamefont {Nys}\ and\ \citenamefont
  {Carleo}(2022)}]{Nys22Fermions}%
  \BibitemOpen
  \bibfield  {author} {\bibinfo {author} {\bibfnamefont {J.}~\bibnamefont
  {Nys}}\ and\ \bibinfo {author} {\bibfnamefont {G.}~\bibnamefont {Carleo}},\
  }\href {\doibase 10.48550/ARXIV.2205.00733} {\enquote {\bibinfo {title}
  {Variational solutions to fermion-to-qubit mappings in two spatial
  dimensions},}\ } (\bibinfo {year} {2022})\BibitemShut {NoStop}%
\bibitem [{\citenamefont {Martens}\ and\ \citenamefont
  {Grosse}(2015)}]{Martens2015Arxiv}%
  \BibitemOpen
  \bibfield  {author} {\bibinfo {author} {\bibfnamefont {J.}~\bibnamefont
  {Martens}}\ and\ \bibinfo {author} {\bibfnamefont {R.}~\bibnamefont
  {Grosse}},\ }\href {\doibase 10.48550/ARXIV.1503.05671} {\enquote {\bibinfo
  {title} {Optimizing neural networks with kronecker-factored approximate
  curvature},}\ } (\bibinfo {year} {2015})\BibitemShut {NoStop}%
\bibitem [{\citenamefont {Webber}\ and\ \citenamefont
  {Lindsey}(2021)}]{Webber2021Hessian}%
  \BibitemOpen
  \bibfield  {author} {\bibinfo {author} {\bibfnamefont {R.~J.}\ \bibnamefont
  {Webber}}\ and\ \bibinfo {author} {\bibfnamefont {M.}~\bibnamefont
  {Lindsey}},\ }\href@noop {} {\enquote {\bibinfo {title}
  {Rayleigh-gauss-newton optimization with enhanced sampling for variational
  monte carlo},}\ } (\bibinfo {year} {2021}),\ \Eprint
  {http://arxiv.org/abs/arXiv:2106.10558} {arXiv:2106.10558} \BibitemShut
  {NoStop}%
\bibitem [{\citenamefont {Carleo}\ \emph {et~al.}(2012)\citenamefont {Carleo},
  \citenamefont {Becca}, \citenamefont {Schir{\'{o}}},\ and\ \citenamefont
  {Fabrizio}}]{Carleo2012TDVP}%
  \BibitemOpen
  \bibfield  {author} {\bibinfo {author} {\bibfnamefont {G.}~\bibnamefont
  {Carleo}}, \bibinfo {author} {\bibfnamefont {F.}~\bibnamefont {Becca}},
  \bibinfo {author} {\bibfnamefont {M.}~\bibnamefont {Schir{\'{o}}}}, \ and\
  \bibinfo {author} {\bibfnamefont {M.}~\bibnamefont {Fabrizio}},\ }\href
  {\doibase 10.1038/srep00243} {\bibfield  {journal} {\bibinfo  {journal}
  {Scientific Reports}\ }\textbf {\bibinfo {volume} {2}} (\bibinfo {year}
  {2012}),\ 10.1038/srep00243}\BibitemShut {NoStop}%
\bibitem [{\citenamefont {Yuan}\ \emph {et~al.}(2019)\citenamefont {Yuan},
  \citenamefont {Endo}, \citenamefont {Zhao}, \citenamefont {Li},\ and\
  \citenamefont {Benjamin}}]{Yuan2019TDVP}%
  \BibitemOpen
  \bibfield  {author} {\bibinfo {author} {\bibfnamefont {X.}~\bibnamefont
  {Yuan}}, \bibinfo {author} {\bibfnamefont {S.}~\bibnamefont {Endo}}, \bibinfo
  {author} {\bibfnamefont {Q.}~\bibnamefont {Zhao}}, \bibinfo {author}
  {\bibfnamefont {Y.}~\bibnamefont {Li}}, \ and\ \bibinfo {author}
  {\bibfnamefont {S.~C.}\ \bibnamefont {Benjamin}},\ }\href {\doibase
  10.22331/q-2019-10-07-191} {\bibfield  {journal} {\bibinfo  {journal}
  {Quantum}\ }\textbf {\bibinfo {volume} {3}},\ \bibinfo {pages} {191}
  (\bibinfo {year} {2019})}\BibitemShut {NoStop}%
\bibitem [{\citenamefont {Hofmann}\ \emph {et~al.}(2021)\citenamefont
  {Hofmann}, \citenamefont {Fabiani}, \citenamefont {Mentink}, \citenamefont
  {Carleo},\ and\ \citenamefont {Sentef}}]{Hofmann2021TDVPNoise}%
  \BibitemOpen
  \bibfield  {author} {\bibinfo {author} {\bibfnamefont {D.}~\bibnamefont
  {Hofmann}}, \bibinfo {author} {\bibfnamefont {G.}~\bibnamefont {Fabiani}},
  \bibinfo {author} {\bibfnamefont {J.~H.}\ \bibnamefont {Mentink}}, \bibinfo
  {author} {\bibfnamefont {G.}~\bibnamefont {Carleo}}, \ and\ \bibinfo {author}
  {\bibfnamefont {M.~A.}\ \bibnamefont {Sentef}},\ }\href@noop {} {\enquote
  {\bibinfo {title} {Role of stochastic noise and generalization error in the
  time propagation of neural-network quantum states},}\ } (\bibinfo {year}
  {2021}),\ \Eprint {http://arxiv.org/abs/arXiv:2105.01054} {arXiv:2105.01054}
  \BibitemShut {NoStop}%
\bibitem [{\citenamefont {Guti{\'{e}}rrez}\ and\ \citenamefont
  {Mendl}(2022)}]{Gutirrez2022Quantum-Implicit}%
  \BibitemOpen
  \bibfield  {author} {\bibinfo {author} {\bibfnamefont {I.~L.}\ \bibnamefont
  {Guti{\'{e}}rrez}}\ and\ \bibinfo {author} {\bibfnamefont {C.~B.}\
  \bibnamefont {Mendl}},\ }\href {\doibase 10.22331/q-2022-01-20-627}
  {\bibfield  {journal} {\bibinfo  {journal} {Quantum}\ }\textbf {\bibinfo
  {volume} {6}},\ \bibinfo {pages} {627} (\bibinfo {year} {2022})}\BibitemShut
  {NoStop}%
\bibitem [{\citenamefont {Reh}\ \emph {et~al.}(2021)\citenamefont {Reh},
  \citenamefont {Schmitt},\ and\ \citenamefont {G\"arttner}}]{Reh2021PRL}%
  \BibitemOpen
  \bibfield  {author} {\bibinfo {author} {\bibfnamefont {M.}~\bibnamefont
  {Reh}}, \bibinfo {author} {\bibfnamefont {M.}~\bibnamefont {Schmitt}}, \ and\
  \bibinfo {author} {\bibfnamefont {M.}~\bibnamefont {G\"arttner}},\ }\href
  {\doibase 10.1103/PhysRevLett.127.230501} {\bibfield  {journal} {\bibinfo
  {journal} {Phys. Rev. Lett.}\ }\textbf {\bibinfo {volume} {127}},\ \bibinfo
  {pages} {230501} (\bibinfo {year} {2021})}\BibitemShut {NoStop}%
\bibitem [{\citenamefont {Torlai}\ \emph {et~al.}(2018)\citenamefont {Torlai},
  \citenamefont {Mazzola}, \citenamefont {Carrasquilla}, \citenamefont
  {Troyer}, \citenamefont {Melko},\ and\ \citenamefont
  {Carleo}}]{Torlai2018NatPhys}%
  \BibitemOpen
  \bibfield  {author} {\bibinfo {author} {\bibfnamefont {G.}~\bibnamefont
  {Torlai}}, \bibinfo {author} {\bibfnamefont {G.}~\bibnamefont {Mazzola}},
  \bibinfo {author} {\bibfnamefont {J.}~\bibnamefont {Carrasquilla}}, \bibinfo
  {author} {\bibfnamefont {M.}~\bibnamefont {Troyer}}, \bibinfo {author}
  {\bibfnamefont {R.}~\bibnamefont {Melko}}, \ and\ \bibinfo {author}
  {\bibfnamefont {G.}~\bibnamefont {Carleo}},\ }\href {\doibase
  10.1038/s41567-018-0048-5} {\bibfield  {journal} {\bibinfo  {journal} {Nature
  Physics}\ }\textbf {\bibinfo {volume} {14}},\ \bibinfo {pages} {447}
  (\bibinfo {year} {2018})}\BibitemShut {NoStop}%
\bibitem [{\citenamefont {Torlai}\ \emph {et~al.}(2019)\citenamefont {Torlai},
  \citenamefont {Timar}, \citenamefont {van Nieuwenburg}, \citenamefont
  {Levine}, \citenamefont {Omran}, \citenamefont {Keesling}, \citenamefont
  {Bernien}, \citenamefont {Greiner}, \citenamefont {Vuleti{\'{c}}},
  \citenamefont {Lukin}, \citenamefont {Melko},\ and\ \citenamefont
  {Endres}}]{Torlai2019PRLQST}%
  \BibitemOpen
  \bibfield  {author} {\bibinfo {author} {\bibfnamefont {G.}~\bibnamefont
  {Torlai}}, \bibinfo {author} {\bibfnamefont {B.}~\bibnamefont {Timar}},
  \bibinfo {author} {\bibfnamefont {E.~P.}\ \bibnamefont {van Nieuwenburg}},
  \bibinfo {author} {\bibfnamefont {H.}~\bibnamefont {Levine}}, \bibinfo
  {author} {\bibfnamefont {A.}~\bibnamefont {Omran}}, \bibinfo {author}
  {\bibfnamefont {A.}~\bibnamefont {Keesling}}, \bibinfo {author}
  {\bibfnamefont {H.}~\bibnamefont {Bernien}}, \bibinfo {author} {\bibfnamefont
  {M.}~\bibnamefont {Greiner}}, \bibinfo {author} {\bibfnamefont
  {V.}~\bibnamefont {Vuleti{\'{c}}}}, \bibinfo {author} {\bibfnamefont {M.~D.}\
  \bibnamefont {Lukin}}, \bibinfo {author} {\bibfnamefont {R.~G.}\ \bibnamefont
  {Melko}}, \ and\ \bibinfo {author} {\bibfnamefont {M.}~\bibnamefont
  {Endres}},\ }\href {\doibase 10.1103/physrevlett.123.230504} {\bibfield
  {journal} {\bibinfo  {journal} {Physical Review Letters}\ }\textbf {\bibinfo
  {volume} {123}} (\bibinfo {year} {2019}),\
  10.1103/physrevlett.123.230504}\BibitemShut {NoStop}%
\bibitem [{\citenamefont {Melkani}\ \emph {et~al.}(2020)\citenamefont
  {Melkani}, \citenamefont {Gneiting},\ and\ \citenamefont
  {Nori}}]{Melkani2020PRAEigenstate}%
  \BibitemOpen
  \bibfield  {author} {\bibinfo {author} {\bibfnamefont {A.}~\bibnamefont
  {Melkani}}, \bibinfo {author} {\bibfnamefont {C.}~\bibnamefont {Gneiting}}, \
  and\ \bibinfo {author} {\bibfnamefont {F.}~\bibnamefont {Nori}},\ }\href
  {\doibase 10.1103/PhysRevA.102.022412} {\bibfield  {journal} {\bibinfo
  {journal} {Phys. Rev. A}\ }\textbf {\bibinfo {volume} {102}},\ \bibinfo
  {pages} {022412} (\bibinfo {year} {2020})}\BibitemShut {NoStop}%
\bibitem [{\citenamefont {Palmieri}\ \emph {et~al.}(2020)\citenamefont
  {Palmieri}, \citenamefont {Kovlakov}, \citenamefont {Bianchi}, \citenamefont
  {Yudin}, \citenamefont {Straupe}, \citenamefont {Biamonte},\ and\
  \citenamefont {Kulik}}]{Palmieri2020NPJ}%
  \BibitemOpen
  \bibfield  {author} {\bibinfo {author} {\bibfnamefont {A.~M.}\ \bibnamefont
  {Palmieri}}, \bibinfo {author} {\bibfnamefont {E.}~\bibnamefont {Kovlakov}},
  \bibinfo {author} {\bibfnamefont {F.}~\bibnamefont {Bianchi}}, \bibinfo
  {author} {\bibfnamefont {D.}~\bibnamefont {Yudin}}, \bibinfo {author}
  {\bibfnamefont {S.}~\bibnamefont {Straupe}}, \bibinfo {author} {\bibfnamefont
  {J.~D.}\ \bibnamefont {Biamonte}}, \ and\ \bibinfo {author} {\bibfnamefont
  {S.}~\bibnamefont {Kulik}},\ }\href {\doibase 10.1038/s41534-020-0248-6}
  {\bibfield  {journal} {\bibinfo  {journal} {npj Quantum Information}\
  }\textbf {\bibinfo {volume} {6}} (\bibinfo {year} {2020}),\
  10.1038/s41534-020-0248-6}\BibitemShut {NoStop}%
\bibitem [{\citenamefont {Torlai}\ \emph {et~al.}(2020)\citenamefont {Torlai},
  \citenamefont {Mazzola}, \citenamefont {Carleo},\ and\ \citenamefont
  {Mezzacapo}}]{Torlai2020PRR}%
  \BibitemOpen
  \bibfield  {author} {\bibinfo {author} {\bibfnamefont {G.}~\bibnamefont
  {Torlai}}, \bibinfo {author} {\bibfnamefont {G.}~\bibnamefont {Mazzola}},
  \bibinfo {author} {\bibfnamefont {G.}~\bibnamefont {Carleo}}, \ and\ \bibinfo
  {author} {\bibfnamefont {A.}~\bibnamefont {Mezzacapo}},\ }\href {\doibase
  10.1103/physrevresearch.2.022060} {\bibfield  {journal} {\bibinfo  {journal}
  {Physical Review Research}\ }\textbf {\bibinfo {volume} {2}} (\bibinfo {year}
  {2020}),\ 10.1103/physrevresearch.2.022060}\BibitemShut {NoStop}%
\bibitem [{\citenamefont {Ahmed}\ \emph {et~al.}(2021)\citenamefont {Ahmed},
  \citenamefont {S\'anchez Mu\~noz}, \citenamefont {Nori},\ and\ \citenamefont
  {Kockum}}]{AhmedPRL2021QST}%
  \BibitemOpen
  \bibfield  {author} {\bibinfo {author} {\bibfnamefont {S.}~\bibnamefont
  {Ahmed}}, \bibinfo {author} {\bibfnamefont {C.}~\bibnamefont {S\'anchez
  Mu\~noz}}, \bibinfo {author} {\bibfnamefont {F.}~\bibnamefont {Nori}}, \ and\
  \bibinfo {author} {\bibfnamefont {A.~F.}\ \bibnamefont {Kockum}},\ }\href
  {\doibase 10.1103/PhysRevLett.127.140502} {\bibfield  {journal} {\bibinfo
  {journal} {Phys. Rev. Lett.}\ }\textbf {\bibinfo {volume} {127}},\ \bibinfo
  {pages} {140502} (\bibinfo {year} {2021})}\BibitemShut {NoStop}%
\bibitem [{\citenamefont {Vicentini}\ \emph {et~al.}(2019)\citenamefont
  {Vicentini}, \citenamefont {Biella}, \citenamefont {Regnault},\ and\
  \citenamefont {Ciuti}}]{vicentini2019prl}%
  \BibitemOpen
  \bibfield  {author} {\bibinfo {author} {\bibfnamefont {F.}~\bibnamefont
  {Vicentini}}, \bibinfo {author} {\bibfnamefont {A.}~\bibnamefont {Biella}},
  \bibinfo {author} {\bibfnamefont {N.}~\bibnamefont {Regnault}}, \ and\
  \bibinfo {author} {\bibfnamefont {C.}~\bibnamefont {Ciuti}},\ }\href
  {\doibase 10.1103/PhysRevLett.122.250503} {\bibfield  {journal} {\bibinfo
  {journal} {Phys. Rev. Lett.}\ }\textbf {\bibinfo {volume} {122}},\ \bibinfo
  {pages} {250503} (\bibinfo {year} {2019})}\BibitemShut {NoStop}%
\bibitem [{\citenamefont {Hartmann}\ and\ \citenamefont
  {Carleo}(2019)}]{Hartmann2019PRLDissipative}%
  \BibitemOpen
  \bibfield  {author} {\bibinfo {author} {\bibfnamefont {M.~J.}\ \bibnamefont
  {Hartmann}}\ and\ \bibinfo {author} {\bibfnamefont {G.}~\bibnamefont
  {Carleo}},\ }\href {\doibase 10.1103/PhysRevLett.122.250502} {\bibfield
  {journal} {\bibinfo  {journal} {Phys. Rev. Lett.}\ }\textbf {\bibinfo
  {volume} {122}},\ \bibinfo {pages} {250502} (\bibinfo {year}
  {2019})}\BibitemShut {NoStop}%
\bibitem [{\citenamefont {Nagy}\ and\ \citenamefont
  {Savona}(2019)}]{Nagy2019PRL}%
  \BibitemOpen
  \bibfield  {author} {\bibinfo {author} {\bibfnamefont {A.}~\bibnamefont
  {Nagy}}\ and\ \bibinfo {author} {\bibfnamefont {V.}~\bibnamefont {Savona}},\
  }\href {\doibase 10.1103/PhysRevLett.122.250501} {\bibfield  {journal}
  {\bibinfo  {journal} {Phys. Rev. Lett.}\ }\textbf {\bibinfo {volume} {122}},\
  \bibinfo {pages} {250501} (\bibinfo {year} {2019})}\BibitemShut {NoStop}%
\bibitem [{\citenamefont {Torlai}\ and\ \citenamefont
  {Melko}(2018)}]{torlai2018latent}%
  \BibitemOpen
  \bibfield  {author} {\bibinfo {author} {\bibfnamefont {G.}~\bibnamefont
  {Torlai}}\ and\ \bibinfo {author} {\bibfnamefont {R.~G.}\ \bibnamefont
  {Melko}},\ }\href {\doibase 10.1103/PhysRevLett.120.240503} {\bibfield
  {journal} {\bibinfo  {journal} {Phys. Rev. Lett.}\ }\textbf {\bibinfo
  {volume} {120}},\ \bibinfo {pages} {240503} (\bibinfo {year}
  {2018})}\BibitemShut {NoStop}%
\bibitem [{\citenamefont {Yoshioka}\ and\ \citenamefont
  {Hamazaki}(2019)}]{Yoshioka2019PRB}%
  \BibitemOpen
  \bibfield  {author} {\bibinfo {author} {\bibfnamefont {N.}~\bibnamefont
  {Yoshioka}}\ and\ \bibinfo {author} {\bibfnamefont {R.}~\bibnamefont
  {Hamazaki}},\ }\href {\doibase 10.1103/PhysRevB.99.214306} {\bibfield
  {journal} {\bibinfo  {journal} {Phys. Rev. B}\ }\textbf {\bibinfo {volume}
  {99}},\ \bibinfo {pages} {214306} (\bibinfo {year} {2019})}\BibitemShut
  {NoStop}%
\bibitem [{\citenamefont {Levine}\ \emph {et~al.}(2019)\citenamefont {Levine},
  \citenamefont {Sharir}, \citenamefont {Cohen},\ and\ \citenamefont
  {Shashua}}]{Levine2019}%
  \BibitemOpen
  \bibfield  {author} {\bibinfo {author} {\bibfnamefont {Y.}~\bibnamefont
  {Levine}}, \bibinfo {author} {\bibfnamefont {O.}~\bibnamefont {Sharir}},
  \bibinfo {author} {\bibfnamefont {N.}~\bibnamefont {Cohen}}, \ and\ \bibinfo
  {author} {\bibfnamefont {A.}~\bibnamefont {Shashua}},\ }\href {\doibase
  10.1103/physrevlett.122.065301} {\bibfield  {journal} {\bibinfo  {journal}
  {Physical Review Letters}\ }\textbf {\bibinfo {volume} {122}} (\bibinfo
  {year} {2019}),\ 10.1103/physrevlett.122.065301}\BibitemShut {NoStop}%
\bibitem [{\citenamefont {Nomura}\ \emph {et~al.}(2021)\citenamefont {Nomura},
  \citenamefont {Yoshioka},\ and\ \citenamefont {Nori}}]{Nomura2021PRL}%
  \BibitemOpen
  \bibfield  {author} {\bibinfo {author} {\bibfnamefont {Y.}~\bibnamefont
  {Nomura}}, \bibinfo {author} {\bibfnamefont {N.}~\bibnamefont {Yoshioka}}, \
  and\ \bibinfo {author} {\bibfnamefont {F.}~\bibnamefont {Nori}},\ }\href
  {\doibase 10.1103/PhysRevLett.127.060601} {\bibfield  {journal} {\bibinfo
  {journal} {Phys. Rev. Lett.}\ }\textbf {\bibinfo {volume} {127}},\ \bibinfo
  {pages} {060601} (\bibinfo {year} {2021})}\BibitemShut {NoStop}%
\bibitem [{\citenamefont {Luo}\ \emph {et~al.}(2022)\citenamefont {Luo},
  \citenamefont {Chen}, \citenamefont {Carrasquilla},\ and\ \citenamefont
  {Clark}}]{Luo2022PRL}%
  \BibitemOpen
  \bibfield  {author} {\bibinfo {author} {\bibfnamefont {D.}~\bibnamefont
  {Luo}}, \bibinfo {author} {\bibfnamefont {Z.}~\bibnamefont {Chen}}, \bibinfo
  {author} {\bibfnamefont {J.}~\bibnamefont {Carrasquilla}}, \ and\ \bibinfo
  {author} {\bibfnamefont {B.~K.}\ \bibnamefont {Clark}},\ }\href {\doibase
  10.1103/PhysRevLett.128.090501} {\bibfield  {journal} {\bibinfo  {journal}
  {Phys. Rev. Lett.}\ }\textbf {\bibinfo {volume} {128}},\ \bibinfo {pages}
  {090501} (\bibinfo {year} {2022})}\BibitemShut {NoStop}%
\bibitem [{\citenamefont {Carrasquilla}\ \emph {et~al.}(2019)\citenamefont
  {Carrasquilla}, \citenamefont {Torlai}, \citenamefont {Melko},\ and\
  \citenamefont {Aolita}}]{Carrasquilla2019POVM}%
  \BibitemOpen
  \bibfield  {author} {\bibinfo {author} {\bibfnamefont {J.}~\bibnamefont
  {Carrasquilla}}, \bibinfo {author} {\bibfnamefont {G.}~\bibnamefont
  {Torlai}}, \bibinfo {author} {\bibfnamefont {R.~G.}\ \bibnamefont {Melko}}, \
  and\ \bibinfo {author} {\bibfnamefont {L.}~\bibnamefont {Aolita}},\ }\href
  {\doibase 10.1038/s42256-019-0028-1} {\bibfield  {journal} {\bibinfo
  {journal} {Nature Machine Intelligence}\ }\textbf {\bibinfo {volume} {1}},\
  \bibinfo {pages} {155} (\bibinfo {year} {2019})}\BibitemShut {NoStop}%
\bibitem [{\citenamefont {Donatella}\ \emph {et~al.}(2022)\citenamefont
  {Donatella}, \citenamefont {Denis}, \citenamefont {Boité},\ and\
  \citenamefont {Ciuti}}]{Donatella2022Private}%
  \BibitemOpen
  \bibfield  {author} {\bibinfo {author} {\bibfnamefont {K.}~\bibnamefont
  {Donatella}}, \bibinfo {author} {\bibfnamefont {Z.}~\bibnamefont {Denis}},
  \bibinfo {author} {\bibfnamefont {A.~L.}\ \bibnamefont {Boité}}, \ and\
  \bibinfo {author} {\bibfnamefont {C.}~\bibnamefont {Ciuti}},\ }\href@noop {}
  {\enquote {\bibinfo {title} {Dynamics with autoregressive neural quantum
  states: Application to critical quench dynamics},}\ }\bibinfo {howpublished}
  {Unpublished} (\bibinfo {year} {2022})\BibitemShut {NoStop}%
\bibitem [{\citenamefont {Donatella}\ \emph {et~al.}(2021)\citenamefont
  {Donatella}, \citenamefont {Denis}, \citenamefont {Boit{\'{e}}},\ and\
  \citenamefont {Ciuti}}]{Donatella2021PRA}%
  \BibitemOpen
  \bibfield  {author} {\bibinfo {author} {\bibfnamefont {K.}~\bibnamefont
  {Donatella}}, \bibinfo {author} {\bibfnamefont {Z.}~\bibnamefont {Denis}},
  \bibinfo {author} {\bibfnamefont {A.~L.}\ \bibnamefont {Boit{\'{e}}}}, \ and\
  \bibinfo {author} {\bibfnamefont {C.}~\bibnamefont {Ciuti}},\ }\href
  {\doibase 10.1103/physreva.104.062407} {\bibfield  {journal} {\bibinfo
  {journal} {Physical Review A}\ }\textbf {\bibinfo {volume} {104}} (\bibinfo
  {year} {2021}),\ 10.1103/physreva.104.062407}\BibitemShut {NoStop}%
\bibitem [{Note2()}]{Note2}%
  \BibitemOpen
  \bibinfo {note} {This equation is valid for both real and complex parameters.
  See~\cite {Yuan2019TDVP}, Table 1 and following discussions.}\BibitemShut
  {Stop}%
\bibitem [{Note3()}]{Note3}%
  \BibitemOpen
  \bibinfo {note} {Weak convergence, in the sense of Stochastic Differential
  Equations, means that we converge to the correct steady-state but the
  solution at intermediate times does not necessarily represents the physical
  dynamics.}\BibitemShut {Stop}%
\bibitem [{\citenamefont {Jin}\ \emph {et~al.}(2018)\citenamefont {Jin},
  \citenamefont {Biella}, \citenamefont {Viyuela}, \citenamefont {Ciuti},
  \citenamefont {Fazio},\ and\ \citenamefont {Rossini}}]{Jin2018PRBIsing}%
  \BibitemOpen
  \bibfield  {author} {\bibinfo {author} {\bibfnamefont {J.}~\bibnamefont
  {Jin}}, \bibinfo {author} {\bibfnamefont {A.}~\bibnamefont {Biella}},
  \bibinfo {author} {\bibfnamefont {O.}~\bibnamefont {Viyuela}}, \bibinfo
  {author} {\bibfnamefont {C.}~\bibnamefont {Ciuti}}, \bibinfo {author}
  {\bibfnamefont {R.}~\bibnamefont {Fazio}}, \ and\ \bibinfo {author}
  {\bibfnamefont {D.}~\bibnamefont {Rossini}},\ }\href {\doibase
  10.1103/PhysRevB.98.241108} {\bibfield  {journal} {\bibinfo  {journal} {Phys.
  Rev. B}\ }\textbf {\bibinfo {volume} {98}},\ \bibinfo {pages} {241108}
  (\bibinfo {year} {2018})}\BibitemShut {NoStop}%
\bibitem [{\citenamefont {Paz}\ and\ \citenamefont
  {Maghrebi}(2021)}]{Daniel2021PRAIsingField}%
  \BibitemOpen
  \bibfield  {author} {\bibinfo {author} {\bibfnamefont {D.~A.}\ \bibnamefont
  {Paz}}\ and\ \bibinfo {author} {\bibfnamefont {M.~F.}\ \bibnamefont
  {Maghrebi}},\ }\href {\doibase 10.1103/PhysRevA.104.023713} {\bibfield
  {journal} {\bibinfo  {journal} {Phys. Rev. A}\ }\textbf {\bibinfo {volume}
  {104}},\ \bibinfo {pages} {023713} (\bibinfo {year} {2021})}\BibitemShut
  {NoStop}%
\bibitem [{\citenamefont {Minganti}\ \emph {et~al.}(2018)\citenamefont
  {Minganti}, \citenamefont {Biella}, \citenamefont {Bartolo},\ and\
  \citenamefont {Ciuti}}]{Minganti2018Spectral}%
  \BibitemOpen
  \bibfield  {author} {\bibinfo {author} {\bibfnamefont {F.}~\bibnamefont
  {Minganti}}, \bibinfo {author} {\bibfnamefont {A.}~\bibnamefont {Biella}},
  \bibinfo {author} {\bibfnamefont {N.}~\bibnamefont {Bartolo}}, \ and\
  \bibinfo {author} {\bibfnamefont {C.}~\bibnamefont {Ciuti}},\ }\href
  {\doibase 10.1103/physreva.98.042118} {\bibfield  {journal} {\bibinfo
  {journal} {Physical Review A}\ }\textbf {\bibinfo {volume} {98}} (\bibinfo
  {year} {2018}),\ 10.1103/physreva.98.042118}\BibitemShut {NoStop}%
\bibitem [{\citenamefont {Frostig}\ \emph {et~al.}(2018)\citenamefont
  {Frostig}, \citenamefont {Johnson},\ and\ \citenamefont
  {Leary}}]{frostig2018jax}%
  \BibitemOpen
  \bibfield  {author} {\bibinfo {author} {\bibfnamefont {R.}~\bibnamefont
  {Frostig}}, \bibinfo {author} {\bibfnamefont {M.~J.}\ \bibnamefont
  {Johnson}}, \ and\ \bibinfo {author} {\bibfnamefont {C.}~\bibnamefont
  {Leary}},\ }\href@noop {} {\bibfield  {journal} {\bibinfo  {journal} {Systems
  for Machine Learning}\ } (\bibinfo {year} {2018})}\BibitemShut {NoStop}%
\bibitem [{\citenamefont {Bradbury}\ \emph {et~al.}(2018)\citenamefont
  {Bradbury}, \citenamefont {Frostig}, \citenamefont {Hawkins}, \citenamefont
  {Johnson}, \citenamefont {Leary}, \citenamefont {Maclaurin}, \citenamefont
  {Necula}, \citenamefont {Paszke}, \citenamefont {Vander{P}las}, \citenamefont
  {Wanderman-{M}ilne},\ and\ \citenamefont {Zhang}}]{jax2018github}%
  \BibitemOpen
  \bibfield  {author} {\bibinfo {author} {\bibfnamefont {J.}~\bibnamefont
  {Bradbury}}, \bibinfo {author} {\bibfnamefont {R.}~\bibnamefont {Frostig}},
  \bibinfo {author} {\bibfnamefont {P.}~\bibnamefont {Hawkins}}, \bibinfo
  {author} {\bibfnamefont {M.~J.}\ \bibnamefont {Johnson}}, \bibinfo {author}
  {\bibfnamefont {C.}~\bibnamefont {Leary}}, \bibinfo {author} {\bibfnamefont
  {D.}~\bibnamefont {Maclaurin}}, \bibinfo {author} {\bibfnamefont
  {G.}~\bibnamefont {Necula}}, \bibinfo {author} {\bibfnamefont
  {A.}~\bibnamefont {Paszke}}, \bibinfo {author} {\bibfnamefont
  {J.}~\bibnamefont {Vander{P}las}}, \bibinfo {author} {\bibfnamefont
  {S.}~\bibnamefont {Wanderman-{M}ilne}}, \ and\ \bibinfo {author}
  {\bibfnamefont {Q.}~\bibnamefont {Zhang}},\ }\href
  {http://github.com/google/jax} {\enquote {\bibinfo {title} {{JAX}: composable
  transformations of {P}ython+{N}um{P}y programs},}\ } (\bibinfo {year}
  {2018})\BibitemShut {NoStop}%
\bibitem [{\citenamefont {Heek}\ \emph {et~al.}(2020)\citenamefont {Heek},
  \citenamefont {Levskaya}, \citenamefont {Oliver}, \citenamefont {Ritter},
  \citenamefont {Rondepierre}, \citenamefont {Steiner},\ and\ \citenamefont
  {van {Z}ee}}]{flax2020github}%
  \BibitemOpen
  \bibfield  {author} {\bibinfo {author} {\bibfnamefont {J.}~\bibnamefont
  {Heek}}, \bibinfo {author} {\bibfnamefont {A.}~\bibnamefont {Levskaya}},
  \bibinfo {author} {\bibfnamefont {A.}~\bibnamefont {Oliver}}, \bibinfo
  {author} {\bibfnamefont {M.}~\bibnamefont {Ritter}}, \bibinfo {author}
  {\bibfnamefont {B.}~\bibnamefont {Rondepierre}}, \bibinfo {author}
  {\bibfnamefont {A.}~\bibnamefont {Steiner}}, \ and\ \bibinfo {author}
  {\bibfnamefont {M.}~\bibnamefont {van {Z}ee}},\ }\href
  {http://github.com/google/flax} {\enquote {\bibinfo {title} {{F}lax: A neural
  network library and ecosystem for {JAX}},}\ } (\bibinfo {year}
  {2020})\BibitemShut {NoStop}%
\bibitem [{\citenamefont {Vicentini}\ \emph {et~al.}(2021)\citenamefont
  {Vicentini}, \citenamefont {Hofmann}, \citenamefont {Szab{\'o}},
  \citenamefont {Wu}, \citenamefont {Roth}, \citenamefont {Giuliani},
  \citenamefont {Pescia}, \citenamefont {Nys}, \citenamefont {Vargas-Calderon},
  \citenamefont {Astrakhantsev},\ and\ \citenamefont {Carleo}}]{netket3}%
  \BibitemOpen
  \bibfield  {author} {\bibinfo {author} {\bibfnamefont {F.}~\bibnamefont
  {Vicentini}}, \bibinfo {author} {\bibfnamefont {D.}~\bibnamefont {Hofmann}},
  \bibinfo {author} {\bibfnamefont {A.}~\bibnamefont {Szab{\'o}}}, \bibinfo
  {author} {\bibfnamefont {D.}~\bibnamefont {Wu}}, \bibinfo {author}
  {\bibfnamefont {C.}~\bibnamefont {Roth}}, \bibinfo {author} {\bibfnamefont
  {C.}~\bibnamefont {Giuliani}}, \bibinfo {author} {\bibfnamefont
  {G.}~\bibnamefont {Pescia}}, \bibinfo {author} {\bibfnamefont
  {J.}~\bibnamefont {Nys}}, \bibinfo {author} {\bibfnamefont {V.}~\bibnamefont
  {Vargas-Calderon}}, \bibinfo {author} {\bibfnamefont {N.}~\bibnamefont
  {Astrakhantsev}}, \ and\ \bibinfo {author} {\bibfnamefont {G.}~\bibnamefont
  {Carleo}},\ }\href@noop {} {\  (\bibinfo {year} {2021})},\ \Eprint
  {http://arxiv.org/abs/2112.10526} {arXiv:2112.10526} \BibitemShut {NoStop}%
\bibitem [{\citenamefont {Carleo}\ \emph {et~al.}(2019)\citenamefont {Carleo},
  \citenamefont {Choo}, \citenamefont {Hofmann}, \citenamefont {Smith},
  \citenamefont {Westerhout}, \citenamefont {Alet}, \citenamefont {Davis},
  \citenamefont {Efthymiou}, \citenamefont {Glasser}, \citenamefont {Lin},
  \citenamefont {Mauri}, \citenamefont {Mazzola}, \citenamefont {Mendl},
  \citenamefont {van Nieuwenburg}, \citenamefont {O'Reilly}, \citenamefont
  {Th{\'e}veniaut}, \citenamefont {Torlai}, \citenamefont {Vicentini},\ and\
  \citenamefont {Wietek}}]{netket2}%
  \BibitemOpen
  \bibfield  {author} {\bibinfo {author} {\bibfnamefont {G.}~\bibnamefont
  {Carleo}}, \bibinfo {author} {\bibfnamefont {K.}~\bibnamefont {Choo}},
  \bibinfo {author} {\bibfnamefont {D.}~\bibnamefont {Hofmann}}, \bibinfo
  {author} {\bibfnamefont {J.~E.~T.}\ \bibnamefont {Smith}}, \bibinfo {author}
  {\bibfnamefont {T.}~\bibnamefont {Westerhout}}, \bibinfo {author}
  {\bibfnamefont {F.}~\bibnamefont {Alet}}, \bibinfo {author} {\bibfnamefont
  {E.~J.}\ \bibnamefont {Davis}}, \bibinfo {author} {\bibfnamefont
  {S.}~\bibnamefont {Efthymiou}}, \bibinfo {author} {\bibfnamefont
  {I.}~\bibnamefont {Glasser}}, \bibinfo {author} {\bibfnamefont {S.-H.}\
  \bibnamefont {Lin}}, \bibinfo {author} {\bibfnamefont {M.}~\bibnamefont
  {Mauri}}, \bibinfo {author} {\bibfnamefont {G.}~\bibnamefont {Mazzola}},
  \bibinfo {author} {\bibfnamefont {C.~B.}\ \bibnamefont {Mendl}}, \bibinfo
  {author} {\bibfnamefont {E.}~\bibnamefont {van Nieuwenburg}}, \bibinfo
  {author} {\bibfnamefont {O.}~\bibnamefont {O'Reilly}}, \bibinfo {author}
  {\bibfnamefont {H.}~\bibnamefont {Th{\'e}veniaut}}, \bibinfo {author}
  {\bibfnamefont {G.}~\bibnamefont {Torlai}}, \bibinfo {author} {\bibfnamefont
  {F.}~\bibnamefont {Vicentini}}, \ and\ \bibinfo {author} {\bibfnamefont
  {A.}~\bibnamefont {Wietek}},\ }\href {\doibase 10.1016/j.softx.2019.100311}
  {\bibfield  {journal} {\bibinfo  {journal} {SoftwareX}\ ,\ \bibinfo {pages}
  {100311}} (\bibinfo {year} {2019})}\BibitemShut {NoStop}%
\bibitem [{\citenamefont {Vicentini}(2021)}]{Vicentini2021Nat}%
  \BibitemOpen
  \bibfield  {author} {\bibinfo {author} {\bibfnamefont {F.}~\bibnamefont
  {Vicentini}},\ }\href {\doibase 10.1038/s42254-021-00285-7} {\bibfield
  {journal} {\bibinfo  {journal} {Nature Reviews Physics}\ }\textbf {\bibinfo
  {volume} {3}},\ \bibinfo {pages} {156} (\bibinfo {year} {2021})}\BibitemShut
  {NoStop}%
\bibitem [{\citenamefont {Häfner}\ and\ \citenamefont
  {Vicentini}(2021)}]{Vicentini2011mpi4jax}%
  \BibitemOpen
  \bibfield  {author} {\bibinfo {author} {\bibfnamefont {D.}~\bibnamefont
  {Häfner}}\ and\ \bibinfo {author} {\bibfnamefont {F.}~\bibnamefont
  {Vicentini}},\ }\href {\doibase 10.21105/joss.03419} {\bibfield  {journal}
  {\bibinfo  {journal} {Journal of Open Source Software}\ }\textbf {\bibinfo
  {volume} {6}} (\bibinfo {year} {2021}),\ 10.21105/joss.03419}\BibitemShut
  {NoStop}%
\bibitem [{\citenamefont {Klambauer}\ \emph {et~al.}(2017)\citenamefont
  {Klambauer}, \citenamefont {Unterthiner}, \citenamefont {Mayr},\ and\
  \citenamefont {Hochreiter}}]{Kaluber2017SELU}%
  \BibitemOpen
  \bibfield  {author} {\bibinfo {author} {\bibfnamefont {G.}~\bibnamefont
  {Klambauer}}, \bibinfo {author} {\bibfnamefont {T.}~\bibnamefont
  {Unterthiner}}, \bibinfo {author} {\bibfnamefont {A.}~\bibnamefont {Mayr}}, \
  and\ \bibinfo {author} {\bibfnamefont {S.}~\bibnamefont {Hochreiter}},\
  }\href {\doibase 10.48550/ARXIV.1706.02515} {\  (\bibinfo {year} {2017}),\
  10.48550/ARXIV.1706.02515}\BibitemShut {NoStop}%
\end{thebibliography}%

\end{document}